\DeclareSIUnit[]{\pu}{p.u.}
\DeclareSIUnit[]{\VA}{VA}
\DeclareSymbolFont{bbold}{U}{bbold}{m}{n}
\DeclareSymbolFontAlphabet{\mathbbold}{bbold}
\newcommand{\real}[0]{\mathbb R}
\DeclareSymbolFont{bbold}{U}{bbold}{m}{n}
\DeclareSymbolFontAlphabet{\mathbbold}{bbold}
\newcommand{\diag}[1]{\ensuremath{\mathrm{diag}\left(#1\right)}}
\DeclarePairedDelimiterX\Set[2]{\lbrace}{\rbrace}%
{ #1 \,\delimsize| \,\mathopen{} #2 }
\newtheoremstyle{bfnote}%
{}{}%
{\itshape}{}%
{\bfseries}{.}%
{ }%
{\thmname{#1}\thmnumber{ #2}\thmnote{ (#3)}}
\theoremstyle{bfnote}
\newtheorem{thm}{Theorem}
\newtheorem{rem}{Remark}
\newtheorem{lem}{Lemma}
\newtheorem{ass}{Assumption}
\newtheorem{defn}{Definition}
\newcommand*\circled[1]{\tikz[baseline=(char.base)]{\node[shape=circle,draw,inner sep=0.05pt] (char) {#1};}}
\setlist[enumerate]{leftmargin=*}
\setlist[itemize]{leftmargin=*}
\let\old@ps@headings\ps@headings
\let\old@ps@IEEEtitlepagestyle\ps@IEEEtitlepagestyle
\def\psccfooter#1{%
    \def\ps@headings{%
        \old@ps@headings%
        \def\@oddfoot{\strut\hfill#1\hfill\strut}%
        \def\@evenfoot{\strut\hfill#1\hfill\strut}%
    }%
    \def\ps@IEEEtitlepagestyle{%
        \old@ps@IEEEtitlepagestyle%
        \def\@oddfoot{\strut\hfill#1\hfill\strut}%
        \def\@evenfoot{\strut\hfill#1\hfill\strut}%
    }%
    \ps@headings%
}
        \parbox{\textwidth}{\hrulefill \\ \small{23rd Power Systems Computation Conference} \hfill \begin{minipage}{0.2\textwidth}\centering \vspace*{4pt} \includegraphics[scale=0.06]{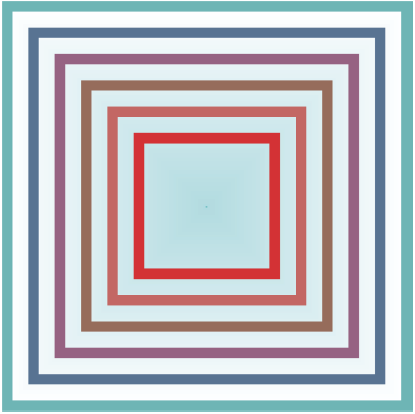}\\\small{PSCC 2024} \end{minipage} \hfill \small{Paris, France --- June 4 -- 7, 2024}}%
\begin{document}
%
\title{\LARGE Oscillations-Aware Frequency Security Assessment via Efficient Worst-Case Frequency Nadir Computation}

\author{
\IEEEauthorblockN{Yan Jiang and Baosen Zhang}
\IEEEauthorblockA{Department of Electrical and Computer Engineering \\
University of Washington, Seattle, U.S.A.\\
\{jiangyan, zhangbao\}@uw.edu}
\and
\IEEEauthorblockN{Hancheng Min}
\IEEEauthorblockA{Department of Electrical and Systems Engineering \\
University of Pennsylvania, Philadelphia, U.S.A.\\
hanchmin@seas.upenn.edu}
}


\maketitle

\begin{abstract}
Frequency security assessment following major disturbances has long been one of the central tasks in power system operations. The standard approach is to study the center of inertia frequency, an aggregate signal for an entire system, to avoid analyzing the frequency signal at individual buses. 
However, as the amount of low-inertia renewable resources in a grid increases, the center of inertia frequency is becoming too coarse to provide reliable frequency security assessment. In this paper, we propose an efficient algorithm to determine the worst-case frequency nadir across all buses for bounded power disturbances, as well as identify the power disturbances leading to that severest scenario. The proposed algorithm allows oscillations-aware frequency security assessment without conducting exhaustive simulations and intractable analysis.
\end{abstract}

\begin{IEEEkeywords}
Frequency security assessment, power system oscillations, worst-case frequency nadir.
\end{IEEEkeywords}

\thanksto{\noindent Y. Jiang and B. Zhang are partially supported by NSF award ECCS-2153937.}

\section{Introduction}


Frequency security assessment following major disturbances is an important problem in power system operations. In compact power systems where the whole system can be modelled as an equivalent synchronous machine, there is a relatively complete understanding of the frequency dynamics, with explicit formulas derived for quantities such as frequency nadir and rate of change of frequency~\cite{jiangtps2021}. However, for power systems whose inter-machine oscillations are unignorable, the frequency security assessment becomes challenging as the magnitude and location of oscillations depend on a number of factors, including inertia, damping, spatial shape of the excitation, network connectivity, and other operating conditions. 

One approach to reduce the size of the problem is to study the center of inertia (COI) frequency, which is the inertia-weighted average of the frequency signals at individual buses~\cite{milano2017rotor,azizi2020local}. This approach works well when all buses exhibit a relatively coherent frequency response~\cite{Min2021lcss}. However, as increased amount of low-inertia renewable resources located in remote areas enter the grid through long transmission lines, it becomes insufficient to  only consider the COI frequency since significant oscillations can occur which makes the transient frequencies on individual buses deviate drastically from the COI frequency.

Large oscillations can become a major obstacle towards power system stability, which makes them receive extensive attention from the power engineering community~(see~\cite{klein1991fundamental,rogers2000,aboul1996damping,rafique2022bibliographic} and references within). However, the fact that oscillations present a variety of highly-coupled interaction among individual buses makes it nontrivial to provide analytical results in large-scale power systems. For example, although it has been widely realized that the transmission network topology and line parameters play prominent roles in spatiotemporal dynamics of oscillations, their exact impact on system oscillatory behavior is far from well-understood. Therefore, dynamical simulations are typically needed to study the impact of network structure and parameters changes~\cite{rogers2000, Gautam2009TPS, Pagnier2019Plos, zhangtseoscillation}, which is inefficient since the number of possible scenarios that needs to be simulated can be significant before one can gain insight. A recent effort to avoid dynamical simulations is by providing a conservative 
analytical boundary for individual oscillatory bus frequencies under any given power disturbances~\cite{Khamisov2024TPS}, whose cumbersome expressions, however, make it still hard to reveal the system level tolerance to arbitrary bounded power disturbances without running extensive numerical tests.

Recently, techniques and results directly providing analytical insights on oscillations at weak grid conditions have gained renewed interest, which are typically done by investigating the network Laplacian~\cite{guo2018cdc,rouco1998eigenvalue,Minl4dc23cluster} or adjacency~\cite{AllibhoyOJCSoscillation} matrix. These approaches have uncovered many interesting patterns, but they cannot yet answer some specific power system operational questions. For instance, an important question is to find the worst-case oscillations (in terms of frequency nadir) for bounded power disturbances, which can serve as more reliable frequency security assessment in power systems with high renewables than the COI frequency. However, there is no existing guideline on how to do this beyond exhaustive simulations or conservative bounds from spectral graph theory. 

In this paper, we propose an efficient worst-case frequency nadir computation algorithm, which enables us to conduct oscillations-aware frequency security assessment without the need for exhaustive simulations and intractable analysis. Moreover, we demonstrate that what allocation of power disturbances will lead to the worst-case frequency nadir varies with the network connectivity. Particularly, we show the worstness of evenly injected disturbances in homogeneous power networks with strong connectivity. As for more practical settings, one can resort to our proposed algorithm to avoid exhaustive simulations and intractable analysis.

The rest of this paper is organized as follows. Section~\ref{sec:model} describes the power system model and formalizes the worst-case frequency nadir problem. 
Section~\ref{sec:algorithm} derives an efficient algorithm to solve this problem, where a proportionality assumption on generation unit parameters is used to ease the problem.
Section~\ref{sec:homo-case} illustrates, through theoretical analyses on power networks with homogeneous generation units, the high dependency of the worst-case frequency nadir on network connectivity. Section~\ref{sec:simulation} validates the performance of our algorithm through detailed simulations. Section~\ref{sec:conclusion} gathers our conclusions and outlook.


\section{Modeling Approach and Problem Formulation}\label{sec:model}
In this section, we describe the power system model used in this paper and motivate the worst-case frequency nadir problem we aim to solve.
\subsection{Power System Model}
We consider a connected power network composed of $n$ buses indexed by $i \in \mathcal{N} := \{1,\dots, n\} $ and transmission lines denoted by unordered pairs $\{i,j\} \in \mathcal{E}\subset \{\{i,j\}:i,j\in\mathcal{N},i \neq j\}$, whose linearized dynamics around an operating point is shown in Fig.\ref{fig:model}. This is a standard model and interested readers can refer to~\cite{Zhao:2013ts} for details on the linearization procedure. The system is modeled as a feedback interconnection of bus dynamics and network dynamics~\cite{pm2019preprint, jiang2021tac, jiang2021lcss}, where the input and output are the power disturbances $\boldsymbol{p} := \left(p_{i}, i \in \mathcal{N} \right) \in \real^n$ (in $\SI{}{\pu}$) and the bus frequency deviations from the nominal value $\boldsymbol{\omega}:=\left(\omega_i, i \in \mathcal{N} \right) \in \real^n$ (in $\SI{}{\pu}$), respectively.\footnote{Throughout this paper, vectors are denoted in lower case bold and matrices are denoted in upper case bold, while scalars are unbolded, unless otherwise specified. Also, $\mathbbold{1}_n, \mathbbold{0}_n \in \real^n$ denote the vectors of all ones and all zeros, respectively, and $\boldsymbol{e}_k\in\real^n$ denotes the $k$th standard basis vector.} 

\begin{figure}
\centering
\includegraphics[width=\columnwidth]{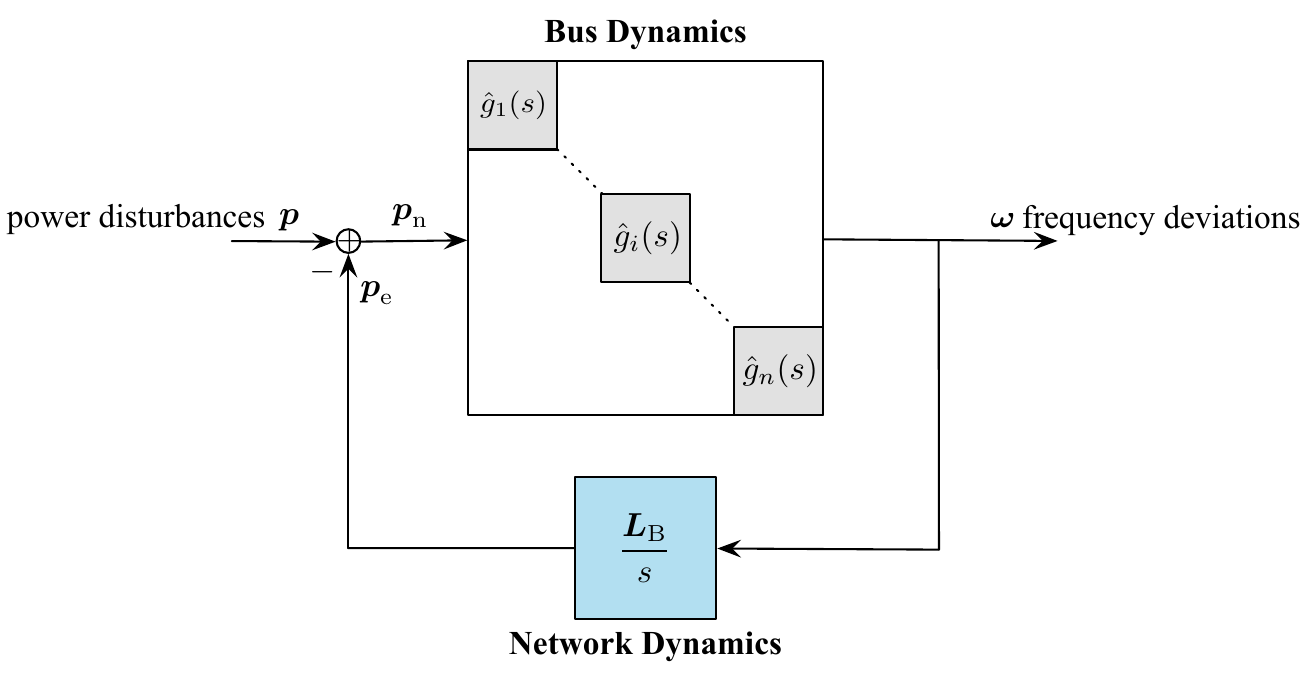}
\caption{Block diagram of power network.}\label{fig:model}
\end{figure}

The bus dynamics that map the net bus power imbalance $\boldsymbol{p}_\mathrm{n} := \left( p_{\mathrm{n},i}, i \in \mathcal{N} \right) \in \real^n$ (in $\SI{}{\pu}$) to the frequency deviations $\boldsymbol{\omega}$ can be described by a diagonal transfer matrix $\hat{\boldsymbol{G}}(s) := \diag{\hat{g}_i(s), i \in \mathcal{N}}$ whose diagonal element $\hat{g}_i(s)$ represents the transfer function of a generation unit on bus $i$, say a synchronous generator or a grid-forming inverter. A standard example typically includes inertia $m_i>0$ (in $\SI{}{\second}$) and damping $d_i>0$ (in $\SI{}{\pu}$) terms, i.e., 
\begin{align}\label{eq:gi-sw}
  \hat{g}_i(s) =\dfrac{1}{m_i s + d_i} \,,\quad \forall i \in \mathcal{N} \,.
\end{align}
Then, we have
\begin{align}\label{eq:bus-dyn}
    \hat {\boldsymbol{\omega}}(s) = \hat{\boldsymbol{G}}(s) \hat{\boldsymbol{p}}_\mathrm{n}(s)\,,
\end{align}
where $\hat {\boldsymbol{\omega}}(s)$ and $\hat {\boldsymbol{p}}_\mathrm{n}(s)$ denote the Laplace transforms of $\boldsymbol{\omega}$ and $\boldsymbol{p}_\mathrm{n}$, respectively.\footnote{We use hat to distinguish the Laplace transform from its time domain counterpart.}

The network dynamics characterizes the relationship between the fluctuations in power drained into the transmission
network $\boldsymbol{p}_\mathrm{e} := \left(p_{\mathrm{e},i}, i \in \mathcal{N} \right) \in \real^n$ (in $\SI{}{\pu}$) and the frequency deviations $\boldsymbol{\omega}$, which is given by a linearized model of the power flow equations~\cite{Purchala2005dc-flow}:
\begin{align}
 \hat {\boldsymbol{p}}_\mathrm{e}(s) = \frac{\boldsymbol{L}_\mathrm{B}}{s} \hat {\boldsymbol{\omega}}(s)\;,\label{eq:Network-dyn}
\end{align}
 where the matrix $\boldsymbol{L}_\mathrm{B}:=\left[ L_{\mathrm{B},{ij}}\right]\in \real^{n \times n}$ is an undirected weighted Laplacian matrix of the network whose $ij$th element is 
\[
\boldsymbol{L}_{\mathrm{B},{ij}}=\Omega_0\partial_{\theta_j}{\sum_{l=1}^n|V_i||V_l|B_{il}\sin(\theta_i-\theta_l)}\Bigr|_{\boldsymbol{\theta}=\boldsymbol{\theta}_0}.
\]
Here, $\boldsymbol{\theta} := \left(\theta_i, i \in \mathcal{N} \right) \in \real^n$ are the voltage angles with $\boldsymbol{\theta}_0$ being the equilibrium angles (in $\SI{}{\radian}$), $|V_i|$ is the (constant) voltage magnitude at bus $i$ (in $\SI{}{\pu}$), $B_{ij}$ is the line $\{i,j\}$ susceptance (in $\SI{}{\pu}$), and $\Omega_0:=2\pi F_0$ is the nominal frequency (in $\SI{}{\radian/s}$) with $F_0$ being $\SI{50}{\hertz}$ or $\SI{60}{\hertz}$
depending on the particular system. 

We are interested in the closed-loop response of frequency deviations $\boldsymbol{\omega}$ following power disturbances $\boldsymbol{p}$ in the power network shown in Fig.~\ref{fig:model}. This can be obtained by combining \eqref{eq:bus-dyn} and \eqref{eq:Network-dyn} through the relation $\boldsymbol{p}_\mathrm{n}=\boldsymbol{p}-\boldsymbol{p}_\mathrm{e}$ as 
\begin{equation}\label{eq:twp}
\hat {\boldsymbol{\omega}}(s)= \left(\boldsymbol{I}_n +\hat{\boldsymbol{G}}(s)\frac{\boldsymbol{L}_\mathrm{B}}{s}\right)^{-1} \hat{\boldsymbol{G}}(s)\hat{\boldsymbol{p}}(s)\,. 
\end{equation}
However, although \eqref{eq:twp} is a closed-form expression, it is difficult to work with since the size of the matrices and vectors involved could be quite high. A standard simplification is to work with the center of inertia (COI) frequency, defined as 
\begin{equation}\label{eq:COI}
    \bar{\omega}:= \dfrac{\sum_{i=1}^n m_i\omega_i}{\sum_{i=1}^n m_i}\,,
\end{equation}
which is the inertia-weighted average
of individual bus frequencies. The COI frequency is a good representative of the system frequency response if the power network is tightly-connected~\cite{Min2021lcss}. However, as more renewable resources are integrated, especially at edges of the power network, significant oscillatory behavior could occur. Thus, it becomes insufficient to only consider the COI frequency. 

\subsection{Illustrative Example of Frequency Oscillations}\label{ssec:example}
\begin{figure}[t!]
\centering
\subfigure[All buses are tightly-connected]
{\includegraphics[width=0.35\columnwidth]{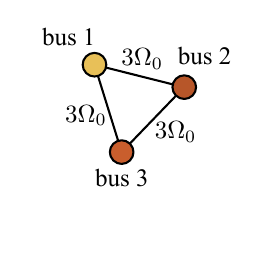}\includegraphics[width=0.6\columnwidth]{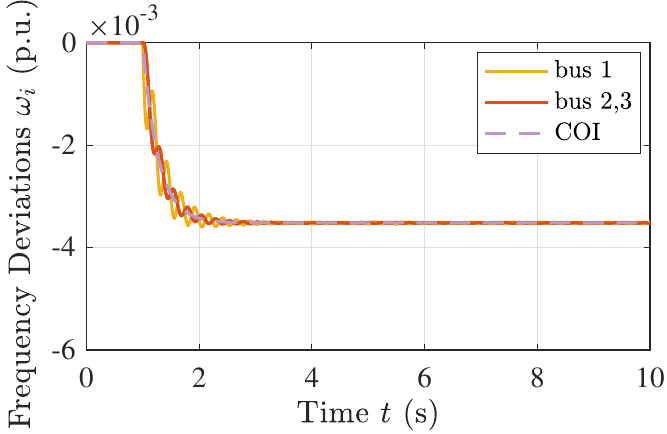}\label{fig:fre-tight-3bus}}
\subfigure[Bus $1$ is weakly-connected from buses $2$ and $3$]
{\includegraphics[width=0.35\columnwidth]{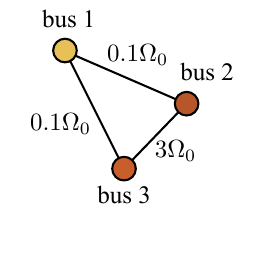}
\includegraphics[width=0.6\columnwidth]{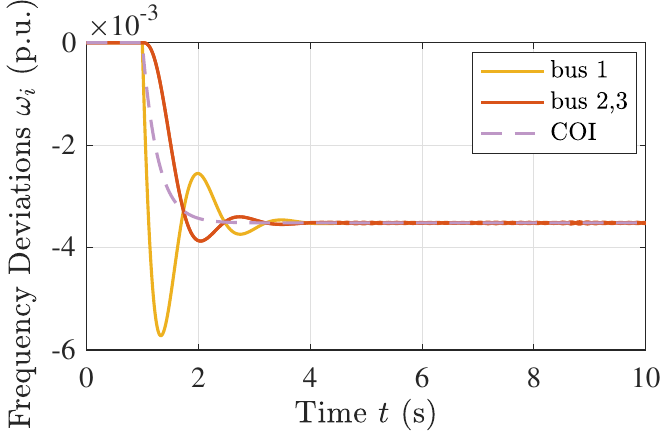}
\label{fig:fre-weak-3bus}}
\caption{A $3$-bus example showing that COI frequency is insufficient for frequency security assessment since the oscillations due to weak connectivity can make the transient frequencies on individual buses deviate drastically from the COI frequency.}
\label{fig:3-bus}
\vspace{-0.4cm}
\end{figure}

To see this, we provide a toy $3$-bus power network example in Fig.~\ref{fig:3-bus} to show different types of oscillatory behaviors. Basically, we compare the frequency responses of the system with strong and weak network connectivity by varying the line impedance parameters on the same network topology, where the same generation units are deployed and the same power disturbances are applied in two cases for a fair comparison.\footnote{Parameters of generation units are chosen to emulate the Great Britain power system under the high renewable penetration scenario~\cite[Table 1]{jiangtps2021}, i.e., $m_i=\SI{4.38}{\second}$ and $d_i=\SI{16}{\second}$ for all buses.} 
When
sudden power disturbances $\boldsymbol{p}= [-0.1689, 0, 0]^T \mathds{U}_{ t \geq 0 }$~$\SI{}{\pu}$ occur at $t=\SI{1}{\second}$,\footnote{$\mathds{U}_{ t \geq 0 } $ denotes the unit-step function.} the trajectories of the frequencies are as shown in Fig.~\ref{fig:3-bus}. 

For the more tightly-connected system in Fig.~\ref{fig:fre-tight-3bus}, the frequency trajectories at all buses closely track each other, even though the disturbance is only at bus $1$. In this case, the COI frequency acts as a good proxy of the nodal frequencies. In contrast, the frequency trajectories in Fig.~\ref{fig:fre-weak-3bus} exhibit significant oscillations against each other with large nadirs while the COI frequency is well-behaved. Therefore, relying just on the COI frequency could lead to an erroneous conclusion about the frequency security of the system. Of course, the behavior in Fig.~\ref{fig:fre-weak-3bus} is not unexpected since bus $1$ is ``weakly'' connected with the other buses. However, as the power network gets larger, it is more challenging to draw intuitive conclusions, which makes additional analysis techniques become necessary.

\subsection{Oscillations-Aware Frequency Security Assessment}\label{sec:opt}
To provide quantitative tools for frequency security assessment, we focus on the metric of frequency nadir in the rest of the paper, which is defined as the maximum frequency deviation from the nominal frequency at each bus during a transient response, i.e., $\max_{t\geq0} |\omega_i(t)|$. Since a too large frequency nadir could trigger undesired protection measures and even cause cascading failures, its value is crucial to frequency security. For example, the maximum allowed frequency nadir is $\SI{800}{\milli\hertz}$ ($\SI{0.016}{\pu}$ on a $\SI{50}{\hertz}$ base) for the European system \cite{Knap2016size} and $\SI{500}{\milli\hertz}$ ($\SI{0.01}{\pu}$ on a $\SI{50}{\hertz}$ base) for the Great Britain system \cite{NG2016,NG2016standard}. 

To make a power network more reliable, our first order of business involves securing its frequency response against unanticipated power disturbances. That is, we would like to prejudge whether the frequency nadir on each bus would stay in the allowed threshold for any step power disturbances $\boldsymbol{p}= \boldsymbol{u}_0 \mathds{U}_{ t \geq 0 }$ with $\boldsymbol{u}_0 \in \real^n$ being arbitrary up to a norm constraint $\|\boldsymbol{u}_0\|\leq\rho$ for some $\rho>0$. Compactly, this asks for the solution of the following \emph{worst-case frequency nadir problem}:
\begin{align}\label{eq:opt-worst-nadir}
	\max_{ \substack{\boldsymbol{p}=\boldsymbol{u}_0 \mathds{U}_{ t \geq 0 }\\\Set*{\boldsymbol{u}_0 \in \real^n\!}{\!\|\boldsymbol{u}_0\|\leq\rho}}} \!\!\!\!\!\! & \max_{i \in \mathcal{N}}\ \  \max_{t>0} |\omega_i(t)|
   & \mathrm{s.t.}\quad  
		\mathrm{dynamics}\ \eqref{eq:twp}\,,
  \end{align}
which is computationally challenging due to the nested maximums over all power disturbance profiles, all buses, and all time.   

Our goal is to design an efficient algorithm to solve the above problem. This is in general challenging due to the high complexity of frequency dynamics. Nevertheless, when the parameters of generation units satisfy a commonly used proportionality assumption~\cite{guo2018cdc, pm2019preprint, jiang2021tac}, the frequency responses can be decomposed in a way that allows us to solve \eqref{eq:opt-worst-nadir} efficiently. Moreover, for certain types of networks, \eqref{eq:opt-worst-nadir} can be solved analytically, providing interesting intuition about the  spatial and temporal behavior of oscillations. 

\section{Efficient Worst-Case Frequency Nadir Algorithm for Proportional Power Networks}\label{sec:algorithm}
In this section, we show that, under a simplifying assumption, it is possible to decompose the frequency dynamics, which enables us to propose an efficient algorithm to solve the worst-case frequency nadir problem formulated in Section~\ref{sec:opt}.
\subsection{Modal Decomposition of Frequency Dynamics}\label{sec:modal}
The example in Section~\ref{ssec:example} shows that the frequency nadir of a power network is a result of interference between the frequency oscillations at each bus, which can have a complicated dependence on the parameters and topology of the network. Hence, a better evaluation of the worst-case frequency nadir requires a deeper understanding of frequency oscillations. However, since the oscillations present a variety of highly-coupled interactions between individual buses, it is challenging to solve \eqref{eq:opt-worst-nadir} in large-scale power systems with heterogeneous generation units. 
To make the analysis tractable, we consider proportionality as a reasonable
first-cut approximation to heterogeneity~\cite{pm2019preprint}, under which the frequency dynamics \eqref{eq:twp} are decoupled and thus the problem \eqref{eq:opt-worst-nadir} can be solved efficiently. Hence, we adopt the following assumption in the rest of this paper:
\begin{ass}[Proportionality]\label{ass:proportion}
There exists a group of proportional parameters $r_i>0$, $i \in \mathcal{N}$, such that 
\[ \hat{g}_i(s) = \dfrac{\hat{g}_\mathrm{o}(s)}{r_i} \,, \]
where $\hat{g}_\mathrm{o}(s)$ is called the representative generation unit.
\end{ass}

For example, for the case where $\hat{g}_i(s)$ has \eqref{eq:gi-sw} as its model, Assumption~\ref{ass:proportion} is satisfied provided that inertia and
damping are both proportional to $r_i$. That is, $\forall i \in \mathcal{N}$, $\exists r_i>0$ such that $m_i=r_im$ and $d_i=r_id$ for some  $m$ and $d$, i.e., 
\begin{equation}\label{eq:go}
\hat{g}_\mathrm{o}(s)=\dfrac{1}{ms+d} \,.   
\end{equation}
\begin{rem}[Proportionality extensions]
The practical relevance of Assumption~\ref{ass:proportion} is justified in many empirical studies. For example, \cite{oakridge2013} shows that, at least in regards of order of magnitude, Assumption \ref{ass:proportion} is a reasonable first-cut approximation to heterogeneity.
In fact, the heterogeneous case can be considered as a diagonal perturbation from the ideal proportional case \cite[Section VI-A]{pm2019preprint}. Thus, the results derived from proportional power networks can be extended to networks with heterogeneous generation units. In this case, the performance can be assessed through a robustness analysis, which is an interesting direction for further research.
\end{rem}

Under Assumption~\ref{ass:proportion}, it has been well-established that the dynamics in \eqref{eq:twp} can be decoupled as~\cite{pm2019preprint, jiang2021tac}
\begin{align}
\hat {\boldsymbol{\omega}}(s) = \boldsymbol{R}^{-\frac{1}{2}} \boldsymbol{V} \diag{\hat{z}_{k}(s), k \in \mathcal{N}} \boldsymbol{V}^T \boldsymbol{R}^{-\frac{1}{2}}\hat{\boldsymbol{p}}(s)\;,\label{eq:Tp}
\end{align}
with
\begin{align}
    \hat{z}_{k}(s) := \frac{s\hat{g}_\mathrm{o}(s)}{s+\lambda_k\hat{g}_\mathrm{o}(s)}\;,\quad \forall k \in \mathcal{N}\,, \label{eq:zk-s}
\end{align}
where $\boldsymbol{R} :=\diag {r_i, i \in \mathcal{N}} \in \real^{n \times n}$ is the proportionality matrix and
\begin{align}\label{eq:Vmatrix}
\boldsymbol{V} \!:=\! \begin{bmatrix} \boldsymbol{v}_1\!:=\!(\sum_{i=1}^n \!r_i)^{-\frac{1}{2}} \boldsymbol{R}^{\frac{1}{2}} \mathbbold{1}_n \!& \!\boldsymbol{v}_2 &\!\cdots \!& \boldsymbol{v}_n \end{bmatrix}\!\!\in\! \real^{n \times n}    
\end{align}
satisfying $\boldsymbol{V}^T\boldsymbol{V}=\boldsymbol{V}\boldsymbol{V}^T=\boldsymbol{I}_n $ is an orthonormal matrix whose columns $\boldsymbol{v}_k:=\left(v_{k,i}, i \in \mathcal{N} \right) \in \real^n$ are unit eigenvectors associated with the scaled Laplacian matrix 
\begin{align}\label{eq:L-def}
\boldsymbol{L} := \boldsymbol{R}^{-\frac{1}{2}} \boldsymbol{L}_\mathrm{B} \boldsymbol{R}^{-\frac{1}{2}}\in \real^{n \times n}    
\end{align} 
such that $\boldsymbol{L} = \boldsymbol{V} \diag{\lambda_k, k \in \mathcal{N}} \boldsymbol{V}^T$
with $\lambda_k$ being the $k$th eigenvalue of $\boldsymbol{L}$ ordered non-decreasingly $(0 = \lambda_1 < \lambda_2 \leq \ldots \leq\lambda_n)$\footnote{Recall that we assume the power network is connected, which means that $\boldsymbol{L}$ has a single zero eigenvalue.}.

Note that $\boldsymbol{V}$ has orthonormal columns $\boldsymbol{v}_k$, $\forall k \in \mathcal{N}$, which are automatically linearly independent~\cite[Theorem 2.1.2]{Horn2012MA}. 
It readily follows that $\boldsymbol{R}^{\frac{1}{2}}\boldsymbol{v}_k$, $\forall k \in \mathcal{N}$, are linearly independent as well since $\boldsymbol{R}$ is nonsingular by its construction. Thus, the $n$ linearly independent vectors $\boldsymbol{R}^{\frac{1}{2}}\boldsymbol{v}_k$, $\forall k \in \mathcal{N}$, form a basis that spans $\real^n$. This implies that any step power disturbances $\boldsymbol{p}=\boldsymbol{u}_0 \mathds{U}_{ t \geq 0 }$ that the power network undergoes can always be decomposed along this basis, i.e., $\exists\boldsymbol{\alpha}:=\left(\alpha_i, i \in \mathcal{N} \right) \in \real^n$ such that
\begin{align}\label{eq:u0-vk}
    \boldsymbol{u}_0=\sum_{k=1}^n \alpha_k \boldsymbol{R}^{\frac{1}{2}}\boldsymbol{v}_k=\boldsymbol{R}^{\frac{1}{2}}\boldsymbol{V}\boldsymbol{\alpha}\,.
\end{align}
Combining \eqref{eq:Tp} and \eqref{eq:u0-vk} via $\hat{\boldsymbol{p}}(s)=\boldsymbol{u}_0/s$ yields the following decomposition of frequency responses along the scaled Laplacian eigenvectors.

\begin{lem}[Decomposition along scaled Laplacian eigenvectors] \label{lem:w-t}Under Assumption~\ref{ass:proportion}, if the power network in Fig.~\ref{fig:model} undergoes step power disturbances $\boldsymbol{p}=\boldsymbol{u}_0 \mathds{U}_{ t \geq 0 }$, then the frequency responses can be decoupled into
   \begin{align}
    {\boldsymbol{\omega}}(t)=&\sum_{k=1}^n\alpha_k h_k(t)\boldsymbol{R}^{-\frac{1}{2}}\boldsymbol{v}_k\label{eq:omega-t-decomple}\\=&\ \dfrac{\alpha_1h_1(t)}{\sqrt{\sum_{i=1}^n r_i}}\mathbbold{1}_n+\sum_{k=2}^n\alpha_k h_k(t)\boldsymbol{R}^{-\frac{1}{2}}\boldsymbol{v}_k\,,\label{eq:omega-t-decomple-coi}
\end{align} 
where
\begin{align}
\boldsymbol{\alpha}:=&\left(\alpha_i, i \in \mathcal{N} \right)=\boldsymbol{V}^T\boldsymbol{R}^{-\frac{1}{2}}\boldsymbol{u}_0\,,\label{eq:alpha}\\
    h_k(t):=&\ \mathscr{L}^{-1}\left\{\frac{\hat{g}_\mathrm{o}(s)}{s+\lambda_k\hat{g}_\mathrm{o}(s)}\right\}\;,\quad \forall k \in \mathcal{N}\,.\label{eq:hk-def}
\end{align}
\end{lem}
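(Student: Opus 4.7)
The plan is to start from the closed-loop expression \eqref{eq:Tp}, substitute the Laplace transform of the step disturbance, invoke the change of basis \eqref{eq:u0-vk}, and then take the inverse Laplace transform term by term.

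First I would compute $\boldsymbol{\alpha}$ explicitly. From \eqref{eq:u0-vk} we have $\boldsymbol{R}^{-\frac{1}{2}}\boldsymbol{u}_0 = \boldsymbol{V}\boldsymbol{\alpha}$, and left-multiplying by $\boldsymbol{V}^T$ while using $\boldsymbol{V}^T\boldsymbol{V}=\boldsymbol{I}_n$ gives \eqref{eq:alpha} immediately. This step also justifies why the decomposition coefficients in \eqref{eq:u0-vk} exist and are unique.

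Next I would plug $\hat{\boldsymbol{p}}(s)=\boldsymbol{u}_0/s$ into \eqref{eq:Tp} and use $\boldsymbol{V}^T\boldsymbol{R}^{-\frac{1}{2}}\boldsymbol{u}_0=\boldsymbol{\alpha}$ to obtain
\begin{equation*}
\hat{\boldsymbol{\omega}}(s) = \boldsymbol{R}^{-\frac{1}{2}}\boldsymbol{V}\,\diag{\hat{z}_k(s), k\in\mathcal{N}}\,\frac{\boldsymbol{\alpha}}{s} = \sum_{k=1}^n \alpha_k\,\frac{\hat{z}_k(s)}{s}\,\boldsymbol{R}^{-\frac{1}{2}}\boldsymbol{v}_k,
\end{equation*}
where in the last equality I expand the matrix-vector product $\boldsymbol{V}\,\diag{\cdot}\,\boldsymbol{\alpha}$ column-by-column. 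Using the definition of $\hat{z}_k(s)$ in \eqref{eq:zk-s}, one sees that $\hat{z}_k(s)/s = \hat{g}_\mathrm{o}(s)/(s+\lambda_k\hat{g}_\mathrm{o}(s))$, which by \eqref{eq:hk-def} is precisely $\mathscr{L}\{h_k(t)\}$. Taking the inverse Laplace transform, which commutes with the finite sum and with the constant vectors $\boldsymbol{R}^{-\frac{1}{2}}\boldsymbol{v}_k$, yields \eqref{eq:omega-t-decomple}.

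Finally, to obtain \eqref{eq:omega-t-decomple-coi} I would split off the $k=1$ term and substitute the explicit form of $\boldsymbol{v}_1$ from \eqref{eq:Vmatrix}, which gives $\boldsymbol{R}^{-\frac{1}{2}}\boldsymbol{v}_1 = (\sum_{i=1}^n r_i)^{-\frac{1}{2}}\mathbbold{1}_n$, producing exactly the first term in \eqref{eq:omega-t-decomple-coi}. There is no real obstacle in this proof since \eqref{eq:Tp} already encodes the modal decomposition of the closed-loop dynamics; the lemma is essentially a rewriting of that decomposition in the time domain for a step input, with the $k=1$ mode made explicit because it represents the common (COI-like) motion along $\mathbbold{1}_n$. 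The only point that deserves a brief verification is that the Laplace inversion is well-defined, which follows from $\hat{g}_\mathrm{o}(s)$ being proper and the closed-loop poles $s+\lambda_k\hat{g}_\mathrm{o}(s)=0$ lying in the stable region under the standing passivity-like assumptions on the bus dynamics.
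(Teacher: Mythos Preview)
Your proposal is correct and follows essentially the same route as the paper's own proof: derive \eqref{eq:alpha} from \eqref{eq:u0-vk} and the orthonormality of $\boldsymbol{V}$, substitute $\hat{\boldsymbol{p}}(s)=\boldsymbol{u}_0/s$ into \eqref{eq:Tp}, expand the resulting diagonal action column-by-column, identify $\hat{z}_k(s)/s$ as $\mathscr{L}\{h_k(t)\}$, take the inverse Laplace transform, and finally substitute the explicit $\boldsymbol{v}_1$ from \eqref{eq:Vmatrix} to split off the $k=1$ term. The only extra content you add is the brief well-posedness remark on the Laplace inversion, which the paper omits.
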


\begin{proof}
See the Appendix \ref{app:lem1-pf}.
\end{proof}

Lemma~\ref{lem:w-t} makes the physical interpretation of modal decomposition more clear. We can observe from \eqref{eq:omega-t-decomple} that the frequency responses are a linear combination of independent modes along the directions dictated by the scaled Laplacian eigenvectors. For example, $h_k(t)$ captures the frequency response of the power network along $\boldsymbol{R}^{-\frac{1}{2}}\boldsymbol{v}_k$ to step power disturbances. Particularly, as shown in \eqref{eq:omega-t-decomple-coi}, the first mode actually characterizes the common behavior among individual buses as a scaled step response of the representative generation unit $\hat{g}_\mathrm{o}(s)$ while the remaining modes represent the oscillations among them. In fact, one can easily show by a similar argument as in \cite{pm2019preprint} that the common behavior is the $r_i$-weighted average of individual bus frequencies, i.e., $(\sum_{i=1}^n r_i\omega_i)/(\sum_{i=1}^n r_i)$, which is exactly the COI frequency defined in \eqref{eq:COI} if $\hat{g}_i(s)$ has \eqref{eq:gi-sw} as its model. Therefore, the modal decomposition in \eqref{eq:omega-t-decomple-coi} confirms our intuition that the COI frequency becomes insufficient for frequency security assessment in a power network where those oscillatory modes are unignorable.

\subsection{Worst-Case Frequency Nadir Algorithm}\label{ssec:alg}
The decoupled frequency dynamics \eqref{eq:omega-t-decomple} derived in Section~\ref{sec:modal} enables us to propose an efficient algorithm summarized in Algorithm~\ref{alg:Nadir} to solve the worst-case frequency nadir problem \eqref{eq:opt-worst-nadir} in Section~\ref{sec:opt}. We now explain the rationale behind Algorithm~\ref{alg:Nadir} in detail.

\makeatletter
\patchcmd{\@algocf@start}
  {-1.5em}
  {0pt}
  {}{}
\makeatother
\DecMargin{0.2cm} 
\SetArgSty{textnormal}
\begin{algorithm}
\caption{Worst-Case Frequency Nadir Computation}\label{alg:Nadir}
    \KwData{$\rho$, $\boldsymbol{R}$, $\boldsymbol{V}$, $\boldsymbol{c}_i(t):=\left(v_{k,i}h_k(t), k \in \mathcal{N} \right)$}
    \KwResult{optimal solution to the problem \eqref{eq:opt-worst-nadir}}
    Choose: length $\delta t>0$ and number $N>0$ of time steps\;
    Initialize: storing table $\boldsymbol{F}:=\left[F_{it}\right]\in \real^{n \times N}\gets\boldsymbol{0}_{n\times N}$\;
    \For{$i\in\mathcal{N}$}{
    \For{$t\in\mathcal{T}:= \{\delta t,\dots, N\delta t\}$}{
    $F_{it}\gets\dfrac{\rho}{\sqrt{r_i}}\left\lVert\boldsymbol{R}^{-\frac{1}{2}}\boldsymbol{V}\boldsymbol{c}_i(t) \right\rVert^\mathrm{D}$\;
    }
    }
    $(i^\star,t^\star)\gets\arg \max_{(i,t)\in\mathcal{N}\times\mathcal{T}} F_{it}$\;
    \tcc{Return worst-case frequency nadir and corresponding power disturbances}
    \Return{$(i^\star,t^\star)$, $F_{i^\star t^\star}$ and $\boldsymbol{u}_{0}^\star(i^\star,t^\star)$}\;
\end{algorithm}

Firstly, with the aid of Lemma~\ref{lem:w-t}, we can reformulate the problem \eqref{eq:opt-worst-nadir} into
\begin{equation}\label{eq:opt-worst-nadir-decouple}
	\max_{ \substack{\Set*{\boldsymbol{u}_0\in \real^n\!}{\!\|\boldsymbol{u}_0\|\leq\rho}}} \max_{i \in \mathcal{N}}\ \  \max_{t>0} \left\lvert\sum_{k=1}^n\alpha_k h_k(t)\dfrac{v_{k,i}}{\sqrt{r_i}}\right\rvert \,, 
\end{equation}
where the objective function is simply the magnitude of the $i$th element of \eqref{eq:omega-t-decomple}. Note that $\alpha_k$ in \eqref{eq:opt-worst-nadir-decouple} actually depends on the optimization variable $\boldsymbol{u}_0$ via \eqref{eq:alpha}. With this in mind, we can further turn \eqref{eq:opt-worst-nadir-decouple} into the following equivalent problem
\begin{equation}\label{eq:opt-worst-nadir-decouple-swap}
	\max_{i \in \mathcal{N}}  \max_{t>0} \max_{ \substack{\Set*{\boldsymbol{u}_0\in \real^n\!}{\!\|\boldsymbol{u}_0\|\leq\rho}}}\dfrac{1}{\sqrt{r_i}}\left\lvert\boldsymbol{c}_i(t)^T\boldsymbol{V}^T\boldsymbol{R}^{-\frac{1}{2}}\boldsymbol{u}_0 \right\rvert\,,
\end{equation}
where we define $\boldsymbol{c}_i(t):=\left(v_{k,i}h_k(t), k \in \mathcal{N} \right) \in \real^n$, swap the order of maximization, and use \eqref{eq:alpha}. Moreover, as we will show later, \eqref{eq:opt-worst-nadir-decouple-swap} can be further simplified as
\begin{align}\label{eq:cen-opt}
   \max_{i \in \mathcal{N}} \max_{t>0}\dfrac{\rho}{\sqrt{r_i}}\left\lVert\boldsymbol{R}^{-\frac{1}{2}}\boldsymbol{V}\boldsymbol{c}_i(t) \right\rVert^\mathrm{D}
\end{align}
whose objective function is the closed-form optimal solution to the inner most optimization problem in \eqref{eq:opt-worst-nadir-decouple-swap}, with the superscript ``$\mathrm{D}$'' denoting the dual norm defined as follows:
\begin{defn}[Dual norm~\cite{Horn2012MA}]
    Given a norm $\|\cdot\|$ on $\real^n$, its dual norm $\|\cdot\|^\mathrm{D}$ is the function from $\real^n$ to $\real$ with values 
\begin{equation}\label{eq:dual-1}
    \|\boldsymbol{x}\|^\mathrm{D}:=\max_{ \substack{\Set*{\boldsymbol{y}:=\left(y_{i}, i \in \mathcal{N} \right) \in \real^n\!}{\!\|\boldsymbol{y}\|\leq1}}}\left\lvert\boldsymbol{x}^T\boldsymbol{y}\right\rvert\,.
\end{equation}     
\end{defn}

The explicit expression of dual norm depends on the type of the norm adopted for bounding $\boldsymbol{y}$ in \eqref{eq:dual-1}~\cite[Equation 5.4.15a]{Horn2012MA}. The dual norms of some common norms are provided by the following lemma.

\begin{lem}[Solutions to dual norm]\label{lem:dual-norm}  
$\forall\boldsymbol{x}:=\left(x_{i}, i \in \mathcal{N} \right)\neq\mathbbold{0}_n \in \real^n$, the solution to \eqref{eq:dual-1} for the case when $\|\cdot\|$ is: 
\begin{itemize}
\item ($2$-norm) $\|\boldsymbol{y}\|_2:=\sqrt{\sum_{i=1}^n y_i^2}$ is $\|\boldsymbol{x}\|_2^\mathrm{D}=\|\boldsymbol{x}\|_2$, which is achieved at $\boldsymbol{y}^\star=\pm \boldsymbol{x}/\|\boldsymbol{x}\|_2$;
\item ($\infty$-norm) $\|\boldsymbol{y}\|_\infty:=\max_{i\in\mathcal{N}}|y_i|$ is $\|\boldsymbol{x}\|_\infty^\mathrm{D}=\|\boldsymbol{x}\|_1$, which is achieved at $\boldsymbol{y}^\star=\pm \mathrm{sign}(\boldsymbol{x})$;
\item ($1$-norm) $\|\boldsymbol{y}\|_1:=\sum_{i=1}^n|y_i|$ is $\|\boldsymbol{x}\|_1^\mathrm{D}=\|\boldsymbol{x}\|_\infty$, which is achieved at $\boldsymbol{y}^\star=\pm \boldsymbol{e}_{i_\mathrm{m}}$ with $i_\mathrm{m}:=\arg\max_{i\in\mathcal{N}}|x_i|$ .
\end{itemize}
\end{lem}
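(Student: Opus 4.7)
The plan is to handle each of the three cases by establishing a tight upper bound for $|\boldsymbol{x}^T\boldsymbol{y}|$ in terms of the stated norm of $\boldsymbol{x}$, and then verify that the proposed $\boldsymbol{y}^\star$ attains this bound while satisfying the constraint $\|\boldsymbol{y}^\star\|\leq 1$. These three identities are the classical $\ell_p$/$\ell_q$ duality in finite dimensions for the conjugate pairs $(2,2)$, $(1,\infty)$, and $(\infty,1)$, so the argument is essentially Hölder's inequality together with an explicit extremizer. I will sketch each case in turn.

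For the $2$-norm case, the upper bound
\begin{equation*}
|\boldsymbol{x}^T\boldsymbol{y}|\leq \|\boldsymbol{x}\|_2\|\boldsymbol{y}\|_2\leq \|\boldsymbol{x}\|_2
\end{equation*}
follows directly from the Cauchy--Schwarz inequality together with $\|\boldsymbol{y}\|_2\leq 1$. Since $\boldsymbol{x}\neq \mathbbold{0}_n$, the vector $\boldsymbol{y}^\star := \pm \boldsymbol{x}/\|\boldsymbol{x}\|_2$ is well-defined, has unit $2$-norm, and gives $|\boldsymbol{x}^T\boldsymbol{y}^\star|=\|\boldsymbol{x}\|_2$, matching the upper bound.

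For the $\infty$-norm case, I first establish
\begin{equation*}
|\boldsymbol{x}^T\boldsymbol{y}|\leq \sum_{i=1}^{n}|x_i||y_i|\leq \|\boldsymbol{y}\|_\infty\sum_{i=1}^{n}|x_i| = \|\boldsymbol{y}\|_\infty\|\boldsymbol{x}\|_1\leq \|\boldsymbol{x}\|_1
\end{equation*}
by pulling out the maximum of $|y_i|$ and invoking $\|\boldsymbol{y}\|_\infty\leq 1$. Choosing $\boldsymbol{y}^\star := \pm\mathrm{sign}(\boldsymbol{x})$ componentwise gives $|y_i^\star|\leq 1$ for every $i$ and $|y_{i}^\star|=1$ for at least one $i$ (since $\boldsymbol{x}\neq\mathbbold{0}_n$), so $\|\boldsymbol{y}^\star\|_\infty = 1$, and $\boldsymbol{x}^T\boldsymbol{y}^\star = \pm\sum_{i=1}^{n}|x_i|=\pm\|\boldsymbol{x}\|_1$, again saturating the bound. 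Any indices with $x_i=0$ contribute nothing to the inner product, so the ambiguity in $\mathrm{sign}(0)$ is harmless.

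For the $1$-norm case, the dual chain is
\begin{equation*}
|\boldsymbol{x}^T\boldsymbol{y}|\leq \sum_{i=1}^{n}|x_i||y_i|\leq \Bigl(\max_{i\in\mathcal{N}}|x_i|\Bigr)\sum_{i=1}^{n}|y_i| = \|\boldsymbol{x}\|_\infty \|\boldsymbol{y}\|_1\leq \|\boldsymbol{x}\|_\infty\,.
\end{equation*}
Picking $\boldsymbol{y}^\star := \pm \boldsymbol{e}_{i_\mathrm{m}}$ with $i_\mathrm{m}:=\arg\max_{i\in\mathcal{N}}|x_i|$ yields $\|\boldsymbol{y}^\star\|_1=1$ and $|\boldsymbol{x}^T\boldsymbol{y}^\star|=|x_{i_\mathrm{m}}|=\|\boldsymbol{x}\|_\infty$. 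No step here is an obstacle—the entire argument is a standard bookkeeping exercise built from Cauchy--Schwarz and the elementary inequality $\sum_i |x_i||y_i|\leq \|\boldsymbol{x}\|_p\|\boldsymbol{y}\|_q$—so the main task is simply to present the three cases in the notation of the paper and exhibit the explicit extremizers already listed in the statement.
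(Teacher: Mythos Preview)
Your proposal is correct and follows essentially the same approach as the paper: for each case you derive the upper bound via Cauchy--Schwarz (for the $2$-norm) or the elementary chain $|\boldsymbol{x}^T\boldsymbol{y}|\leq\sum_i|x_i||y_i|\leq\|\boldsymbol{x}\|_p\|\boldsymbol{y}\|_q$ (for the $\infty$- and $1$-norms), and then verify that the stated $\boldsymbol{y}^\star$ attains it. The only differences are cosmetic---your added remarks on the harmlessness of $\mathrm{sign}(0)$ and the framing via H\"older duality do not appear in the paper but do not change the argument.
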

\begin{proof}
See the Appendix \ref{app:lem2-pf}.      
\end{proof}

In fact, deriving the objective function of \eqref{eq:cen-opt} is simply a matter of applying Lemma~\ref{lem:dual-norm} to the inner most optimization problem in \eqref{eq:opt-worst-nadir-decouple-swap} such that closed-form optimal solutions are obtained under different types of norm bounds on $\boldsymbol{u}_0$.
\begin{thm}[Closed-form solution to the inner most problem in \eqref{eq:opt-worst-nadir-decouple-swap}] \label{them:in-opt}Consider the inner most problem in \eqref{eq:opt-worst-nadir-decouple-swap}, i.e.,
\begin{align}\label{eq:in-pro}
    \max_{ \substack{\Set*{\boldsymbol{u}_0\in \real^n\!}{\!\|\boldsymbol{u}_0\|\leq\rho}}} \dfrac{1}{\sqrt{r_i}}\left\lvert\left(\boldsymbol{R}^{-\frac{1}{2}}\boldsymbol{V}\boldsymbol{c}_i(t)\right)^T\boldsymbol{u}_0 \right\rvert\,.
\end{align}
If the norm constraint on $\boldsymbol{u}_0$ is:
\begin{itemize}
\item ($2$-norm) $\|\boldsymbol{u}_0\|_2\leq\rho$, then the maximum value of \eqref{eq:in-pro} is
\begin{align}\label{eq:sol-2norm}
    \dfrac{\rho}{\sqrt{r_i}}\left\lVert\boldsymbol{R}^{-\frac{1}{2}}\boldsymbol{V}\boldsymbol{c}_i(t) \right\rVert _2=:\dfrac{\rho}{\sqrt{r_i}}\left\lVert\boldsymbol{R}^{-\frac{1}{2}}\boldsymbol{V}\boldsymbol{c}_i(t) \right\rVert _2^\mathrm{D}\,,
\end{align}
achieved at $\boldsymbol{u}_{0}^\star(i,t) = \pm \rho\boldsymbol{R}^{-\frac{1}{2}}\boldsymbol{V}\boldsymbol{c}_i(t)/\|\boldsymbol{R}^{-\frac{1}{2}}\boldsymbol{V}\boldsymbol{c}_i(t) \|_2$;
\item ($\infty$-norm) $\|\boldsymbol{u}_0\|_\infty\leq\rho$, then the maximum value of \eqref{eq:in-pro} is
\begin{align}\label{eq:sol-inf}
    \dfrac{\rho}{\sqrt{r_i}}\left\lVert\boldsymbol{R}^{-\frac{1}{2}}\boldsymbol{V}\boldsymbol{c}_i(t) \right\rVert _1=:\dfrac{\rho}{\sqrt{r_i}}\left\lVert\boldsymbol{R}^{-\frac{1}{2}}\boldsymbol{V}\boldsymbol{c}_i(t) \right\rVert _\infty^\mathrm{D}\,,
\end{align}
achieved at $\boldsymbol{u}_{0}^\star(i,t) = \pm\rho \mathrm{sign}(\boldsymbol{R}^{-\frac{1}{2}}\boldsymbol{V}\boldsymbol{c}_i(t))$;
\item ($1$-norm) $\|\boldsymbol{u}_0\|_1\leq\rho$, then the maximum value of \eqref{eq:in-pro} is
\begin{align}\label{eq:sol-1norm}
    \dfrac{\rho}{\sqrt{r_i}}\left\lVert\boldsymbol{R}^{-\frac{1}{2}}\boldsymbol{V}\boldsymbol{c}_i(t) \right\rVert _\infty=:\dfrac{\rho}{\sqrt{r_i}}\left\lVert\boldsymbol{R}^{-\frac{1}{2}}\boldsymbol{V}\boldsymbol{c}_i(t) \right\rVert _1^\mathrm{D}\,,
\end{align}
achieved at $\boldsymbol{u}_{0}^\star(i,t) = \pm\rho \boldsymbol{e}_{i_\mathrm{m}}$ with $i_\mathrm{m}\!:=\!\arg\max_{i\in\mathcal{N}}|\boldsymbol{e}_i^T\boldsymbol{R}^{-\frac{1}{2}}\boldsymbol{V}\boldsymbol{c}_i(t)|$.
\end{itemize}
\end{thm}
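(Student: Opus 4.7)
The strategy is to reduce the inner-most maximization in \eqref{eq:in-pro} to the canonical form \eqref{eq:dual-1} of a dual-norm computation, and then invoke Lemma~\ref{lem:dual-norm} for each of the three norms.

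First, I would pull the scalar factor $1/\sqrt{r_i}$ outside the maximization, since it does not depend on $\boldsymbol{u}_0$, and then introduce the change of variable $\boldsymbol{y} := \boldsymbol{u}_0/\rho$, which turns the feasible set $\{\boldsymbol{u}_0 \in \real^n : \|\boldsymbol{u}_0\| \le \rho\}$ into the unit ball $\{\boldsymbol{y} \in \real^n : \|\boldsymbol{y}\| \le 1\}$ (valid for any norm by positive homogeneity). Setting $\boldsymbol{x} := \boldsymbol{R}^{-\frac{1}{2}}\boldsymbol{V}\boldsymbol{c}_i(t)$, the inner-most problem becomes
\begin{equation*}
\frac{\rho}{\sqrt{r_i}} \max_{\|\boldsymbol{y}\| \le 1}\bigl|\boldsymbol{x}^{T}\boldsymbol{y}\bigr| = \frac{\rho}{\sqrt{r_i}} \|\boldsymbol{x}\|^{\mathrm{D}}\,,
\end{equation*}
by the definition of the dual norm in \eqref{eq:dual-1}. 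This handles both the value and the fact that the maximizer $\boldsymbol{y}^{\star}$ in \eqref{eq:dual-1} corresponds to $\boldsymbol{u}_{0}^{\star}(i,t) = \rho\,\boldsymbol{y}^{\star}$.

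Next I would specialize to each of the three norms by plugging the relevant case of Lemma~\ref{lem:dual-norm} into the expression above with $\boldsymbol{x} = \boldsymbol{R}^{-\frac{1}{2}}\boldsymbol{V}\boldsymbol{c}_i(t)$. For the $2$-norm constraint, Lemma~\ref{lem:dual-norm} gives $\|\boldsymbol{x}\|_{2}^{\mathrm{D}} = \|\boldsymbol{x}\|_{2}$ attained at $\boldsymbol{y}^{\star} = \pm \boldsymbol{x}/\|\boldsymbol{x}\|_{2}$, yielding \eqref{eq:sol-2norm} and the stated optimizer after multiplying by $\rho$. For the $\infty$-norm constraint, $\|\boldsymbol{x}\|_{\infty}^{\mathrm{D}} = \|\boldsymbol{x}\|_{1}$ attained at $\boldsymbol{y}^{\star} = \pm \mathrm{sign}(\boldsymbol{x})$ gives \eqref{eq:sol-inf}. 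For the $1$-norm constraint, $\|\boldsymbol{x}\|_{1}^{\mathrm{D}} = \|\boldsymbol{x}\|_{\infty}$ attained at $\boldsymbol{y}^{\star} = \pm \boldsymbol{e}_{i_{\mathrm{m}}}$ with $i_{\mathrm{m}} := \arg\max_{i\in\mathcal{N}}|x_i|$ gives \eqref{eq:sol-1norm}. A brief sanity check is needed to confirm that the edge case $\boldsymbol{x} = \mathbbold{0}_n$ is harmless: the optimum value is trivially $0$ and any feasible $\boldsymbol{u}_0$ is a maximizer, so the formulas remain correct with a harmless convention for the $2$-norm optimizer.

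The main obstacle is essentially bookkeeping rather than analysis: the theorem is a direct corollary of Lemma~\ref{lem:dual-norm} once the change of variable and the identification of $\boldsymbol{x}$ are made explicit. I do not anticipate any deeper difficulty, because the optimization is linear in $\boldsymbol{u}_0$ with a norm-ball feasible set, which is precisely the setting the dual norm was defined to handle.
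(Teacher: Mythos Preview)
Your proposal is correct and mirrors the paper's own argument: the paper likewise observes that, after factoring out $1/\sqrt{r_i}$ and the scale $\rho$, the inner problem is exactly the dual-norm computation of Lemma~\ref{lem:dual-norm} with $\boldsymbol{x}=\boldsymbol{R}^{-\frac{1}{2}}\boldsymbol{V}\boldsymbol{c}_i(t)$ and $\boldsymbol{y}=\boldsymbol{u}_0$, from which all three cases follow immediately. Your explicit change of variable $\boldsymbol{y}=\boldsymbol{u}_0/\rho$ and the remark on the degenerate case $\boldsymbol{x}=\mathbbold{0}_n$ are minor elaborations, but the substance is the same.
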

\begin{proof} 
Besides the trivial effect from the scale factor $1/\sqrt{r_i}$ and $\rho$ on the optimal value, the results follow directly from the same argument as in the proof of Lemma~\ref{lem:dual-norm} by setting $\boldsymbol{x}=\boldsymbol{R}^{-\frac{1}{2}}\boldsymbol{V}\boldsymbol{c}_i(t)$ and $\boldsymbol{y}=\boldsymbol{u}_{0}$.
\end{proof}

Theorem~\ref{them:in-opt} allows us to simplify \eqref{eq:opt-worst-nadir-decouple-swap} into \eqref{eq:cen-opt} with the explicit expression of the dual norm being provided by \eqref{eq:sol-2norm}, \eqref{eq:sol-inf}, and \eqref{eq:sol-1norm}, respectively, for the case when the constraint on $\boldsymbol{u}_{0}$ is enforced through $2$-norm, $\infty$-norm, and $1$-norm, respectively. 

Now, it is easy to see that the framework of Algorithm~\ref{alg:Nadir} results from \eqref{eq:cen-opt}. Basically, we simply evaluate its objective function for any $i \in \mathcal{N}$ and $t\in\mathcal{T}:= \{\delta t,\dots, N\delta t\}$ under the chosen length $\delta t>0$ and number $N>0$ of time steps. This yields a $n$ by $N$ table from which we can easily read off the value of the worst-case frequency nadir and $(i^\star,t^\star)$ leading to that value. We highlight that this procedure can be done efficiently in that the dual norm required in \eqref{eq:cen-opt} is cheap to compute with its closed-from expression provided by Theorem~\ref{them:in-opt} under common norm constraints on $\boldsymbol{u}_0$. Once $(i^\star,t^\star)$ is determined, we can easily identify the power disturbances leading to that severest scenario as $\boldsymbol{u}_{0}^\star(i^\star,t^\star)$ whose explicit form is provided by Theorem~\ref{them:in-opt} as well.

\section{Worst-Case Frequency Nadir of Power Networks with Homogeneous Generation Units}\label{sec:homo-case}
In general, even under the proportionality assumption,  it is difficult to provide an analytical form for the worst-case frequency nadir. Therefore, in this section, we focus on the case with homogeneous generation units. That is, we assume $\boldsymbol{R}=\boldsymbol{I}_n$, which implies that $m_i=m$ and $d_i=d$, $\forall i\in \mathcal{N}$, if the representative generation unit is given by \eqref{eq:go}.
This type of network serves as an interesting example to illustrate different phenomena that can occur in power systems with various connectivity.  The next lemma is helpful for the analysis. 


\begin{lem}[Expressions of $h_k(t)$]\label{lem:hk-exp}Under Assumption~\ref{ass:proportion}, if the power network in Fig.~\ref{fig:model} has its representative generation unit given by \eqref{eq:go}, then 
   \begin{align}
    h_1(t)=\dfrac{1}{d}\left(1-e^{-\frac{d}{m}t}\right)\label{eq:h1-def}
\end{align}
and $h_k(t)$, $\forall k\in\mathcal{N}\setminus\{1\}$, falls into one of the three forms depending on the value of damping ratio $\xi_k$:
\begin{align*}
    h_k(t)\!=\!\left\{
    \begin{aligned}
    &\dfrac{e^{-\xi_k\omega_{\mathrm{n}k} t}}{m\omega_{\mathrm{n}k} \sqrt{1 - \xi_k^2}}\sin{\!\left(\!\omega_{\mathrm{n}k} \sqrt{1 - \xi_k^2}t\!\right)\!}\ \ \mathrm{if}\ 0\leq\xi_k<1\mathrm{,}\\ 
    &\dfrac{1}{m}e^{-\omega_{\mathrm{n}k} t}t\ \ \mathrm{if}\ \xi_k=1\mathrm{,}\\ 
    &\dfrac{e^{-\xi_k\omega_{\mathrm{n}k} t}\!\left(e^{\omega_{\mathrm{n}k}\sqrt{\xi_k^2-1} t}\!-\! e^{-\omega_{\mathrm{n}k} \sqrt{\xi_k^2-1}t}\right)}{2m\omega_{\mathrm{n}k} \sqrt{ \xi_k^2-1}}\ \ \mathrm{if}\ \xi_k>1\mathrm{,}
    \end{aligned}
    \right.
\end{align*}
with 
\begin{align}\label{eq:omega_n-xi}
\omega_{\mathrm{n}k}:=\sqrt{\frac{\lambda_k}{m}}\qquad\text{and}\qquad\xi_k:=\frac{d}{2\sqrt{\lambda_k m}}\,.
\end{align}
\end{lem}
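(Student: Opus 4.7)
The plan is a direct inverse Laplace transform calculation, organized by the case structure in the statement. First I would substitute $\hat{g}_\mathrm{o}(s) = 1/(ms+d)$ from \eqref{eq:go} into the definition \eqref{eq:hk-def} of $h_k(t)$ and simplify, obtaining
\[
\frac{\hat{g}_\mathrm{o}(s)}{s+\lambda_k\hat{g}_\mathrm{o}(s)} \;=\; \frac{1}{ms^2+ds+\lambda_k}\,,
\]
so the whole lemma reduces to inverting a single second-order rational function whose parameters depend on $\lambda_k$.

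For $k=1$, I would use that $\lambda_1=0$ because the scaled Laplacian of a connected graph has a simple zero eigenvalue, which collapses the expression to $1/(s(ms+d))$. A one-line partial-fraction decomposition
\[
\frac{1}{s(ms+d)} \;=\; \frac{1}{d}\!\left(\frac{1}{s}-\frac{1}{s+d/m}\right)
\]
then yields \eqref{eq:h1-def} directly by inspection.

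For $k\in\mathcal{N}\setminus\{1\}$, I would pull out the factor of $m$ and match coefficients with the canonical second-order denominator $m(s^2+2\xi_k\omega_{\mathrm{n}k}s+\omega_{\mathrm{n}k}^2)$. Equating the linear and constant coefficients forces exactly the identifications in \eqref{eq:omega_n-xi}, so $\omega_{\mathrm{n}k}$ and $\xi_k$ are not ad-hoc definitions but the natural parameters that emerge. The three stated cases then correspond to complex-conjugate, repeated, and real-distinct roots of the characteristic polynomial, i.e.\ to the sign of $\xi_k^2-1$. In the underdamped case $0\le\xi_k<1$ I would complete the square to write the denominator as $m\bigl((s+\xi_k\omega_{\mathrm{n}k})^2+\omega_{\mathrm{n}k}^2(1-\xi_k^2)\bigr)$ and apply the standard pair $\mathscr{L}^{-1}\{1/((s+a)^2+b^2)\}=e^{-at}\sin(bt)/b$; in the critically damped case $\xi_k=1$ the denominator becomes $m(s+\omega_{\mathrm{n}k})^2$ with inverse transform $te^{-\omega_{\mathrm{n}k}t}$; in the overdamped case $\xi_k>1$ I would factor the two real roots $-\xi_k\omega_{\mathrm{n}k}\pm\omega_{\mathrm{n}k}\sqrt{\xi_k^2-1}$ and decompose into partial fractions, giving the difference of exponentials in the stated form. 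Multiplying each by $1/m$ produces the prefactors in the lemma.

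There is no substantive obstacle here; the proof is essentially bookkeeping for inverse Laplace transforms of a rational function whose structure has been chosen to make the three damping regimes transparent. The only thing requiring a little care is verifying that the case split by the sign of $\xi_k^2-1$ is exhaustive and that each branch uses the correct standard transform pair, both of which are routine. I would therefore present the argument compactly by unifying the first substitution for all $k$, splitting off $k=1$ by the $\lambda_1=0$ observation, and then handling $k\ge 2$ through the three standard cases.
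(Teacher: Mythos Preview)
Your proposal is correct and follows essentially the same route as the paper's own proof: substitute \eqref{eq:go} into \eqref{eq:hk-def}, treat $k=1$ via $\lambda_1=0$ and a partial-fraction split of $1/(s(ms+d))$, and for $k\ge 2$ rewrite the denominator as $m(s^2+2\xi_k\omega_{\mathrm{n}k}s+\omega_{\mathrm{n}k}^2)$ before handling the underdamped, critically damped, and overdamped cases by standard inverse Laplace transform pairs. The only cosmetic difference is that the paper writes out the overdamped partial fractions with explicit roots $\sigma_{1,k},\sigma_{2,k}$, which is exactly what your factorization produces.
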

\begin{proof}
See the Appendix \ref{app:lem4-pf}.
\end{proof}

Consider an arbitrary Laplacian matrix of the network $\boldsymbol{L}_\mathrm{B}\in \real^{n \times n}$. For the case $\boldsymbol{R}=\boldsymbol{I}_n$, we have $\boldsymbol{L}=\boldsymbol{L}_\mathrm{B}$ by \eqref{eq:L-def}. From the definition of eigenvalues of $\boldsymbol{L}$, i.e., $\boldsymbol{L}\boldsymbol{v}_k=\lambda_k \boldsymbol{v}_k$, it is easy to see that, if all weights of $\boldsymbol{L}_\mathrm{B}$ are scaled by some $\beta>0$, the eigenvalues of $\boldsymbol{L}$ will also be scaled by $\beta$ since $(\beta\boldsymbol{L})\boldsymbol{v}_k=\beta(\boldsymbol{L}\boldsymbol{v}_k)=\beta(\lambda_k \boldsymbol{v}_k)=(\beta\lambda_k) \boldsymbol{v}_k$.
This indicates that, for the same network topology, larger weights of $\boldsymbol{L}_\mathrm{B}$ yield larger $\lambda_k$. 
Lemma~\ref{lem:hk-exp} suggests that the particular form of each $h_k(t)$ depends on the value of damping ratio $\xi_k$ which is determined by $\lambda_k$ via \eqref{eq:omega_n-xi}. Thus, even for the same network topology, the worst-case frequency nadir could vary significantly due to the changes in $h_k(t)$.  

\subsection{Power Networks with Strong Connectivity}
Here we show that if weights of $\boldsymbol{L}_\mathrm{B}$ are large enough, then the worst-case frequency nadir under power disturbances with a given $2$-norm bound occurs when the disturbance is evenly injected at all buses. Interestingly, this result holds for all network topologies. 

\begin{thm}[Worstness of even disturbances in networks with strong connectivity]\label{thm:even-largeweights}
 Under Assumption~\ref{ass:proportion} with $\boldsymbol{R}=\boldsymbol{I}_n$ ($n\geq 2$) and the representative generation unit given by \eqref{eq:go}, if 
\begin{align}\label{eq:connect-strong}
    \lambda_2\geq\left(n-0.75\right)d^2/m\,,
\end{align}
then, under the $2$-norm constraint on power disturbances, i.e., $\|\boldsymbol{u}_0\|_2\leq\rho$, the worst-case frequency nadir is $\rho/\left(d\sqrt{n}\right)$, achieved at $\boldsymbol{u}_0= \pm\rho\mathbbold{1}_{n}/\sqrt{n}$.
\end{thm}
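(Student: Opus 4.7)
The plan is to specialize Algorithm~\ref{alg:Nadir} to this homogeneous setting and then bound each modal contribution carefully. With $\boldsymbol{R}=\boldsymbol{I}_n$, the $2$-norm case of Theorem~\ref{them:in-opt} reduces the worst-case nadir to
\[
\max_{i\in\mathcal{N}}\max_{t>0}\rho\,\|\boldsymbol{V}\boldsymbol{c}_i(t)\|_2
=\max_{i\in\mathcal{N}}\max_{t>0}\rho\,\|\boldsymbol{c}_i(t)\|_2
\]
since $\boldsymbol{V}$ is orthonormal. Expanding, I would write $\|\boldsymbol{c}_i(t)\|_2^2=\sum_{k=1}^n v_{k,i}^2 h_k(t)^2$ and split off the coherent mode: because $\boldsymbol{v}_1=\mathbbold{1}_n/\sqrt{n}$, we have $v_{1,i}^2=1/n$ uniformly in $i$, and row-orthonormality of $\boldsymbol{V}$ gives $\sum_{k=2}^n v_{k,i}^2=(n-1)/n$. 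The $k=1$ contribution is therefore $h_1(t)^2/n=(1-e^{-dt/m})^2/(nd^2)$ by Lemma~\ref{lem:hk-exp}.

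The main work is to bound the oscillatory modes $k\geq 2$ uniformly. I would first verify that the connectivity assumption forces all such modes into the underdamped branch of Lemma~\ref{lem:hk-exp}: from \eqref{eq:omega_n-xi}, $\xi_k^2=d^2/(4m\lambda_k)\leq 1/(4(n-0.75))<1$ for $n\geq 2$. In that branch, a direct computation gives $m^2\omega_{dk}^2=m\lambda_k(1-\xi_k^2)=m\lambda_k-d^2/4$ and $2\xi_k\omega_{\mathrm{n}k}=d/m$, so
\[
h_k(t)^2=\frac{\sin^2(\omega_{dk}t)}{m\lambda_k-d^2/4}\,e^{-dt/m}\leq\frac{e^{-dt/m}}{m\lambda_k-d^2/4}.
\]
The condition $\lambda_2\geq(n-0.75)d^2/m$ is exactly calibrated so that $m\lambda_k-d^2/4\geq(n-1)d^2$ for every $k\geq 2$, yielding the clean uniform bound $h_k(t)^2\leq e^{-dt/m}/((n-1)d^2)$. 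Summing with the weights $v_{k,i}^2$ and using $\sum_{k\geq 2}v_{k,i}^2=(n-1)/n$ gives
\[
\sum_{k=2}^n v_{k,i}^2 h_k(t)^2\leq\frac{e^{-dt/m}}{n d^2}.
\]

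Combining both pieces and letting $x:=e^{-dt/m}\in(0,1]$, I obtain
\[
\|\boldsymbol{c}_i(t)\|_2^2\leq\frac{(1-x)^2+x}{n d^2}=\frac{1-x+x^2}{n d^2}\leq\frac{1}{n d^2},
\]
where the last step is the elementary inequality $1-x+x^2\leq 1$ for $x\in[0,1]$. Hence the worst-case nadir is at most $\rho/(d\sqrt{n})$. For achievability, I would take $\boldsymbol{u}_0=\pm\rho\mathbbold{1}_n/\sqrt{n}=\pm\rho\boldsymbol{v}_1$; then \eqref{eq:alpha} yields $\alpha_1=\pm\rho$ and $\alpha_k=0$ for $k\geq 2$, and Lemma~\ref{lem:w-t} gives $\boldsymbol{\omega}(t)=\pm(\rho h_1(t)/\sqrt{n})\mathbbold{1}_n$, whose magnitude monotonically approaches $\rho/(d\sqrt{n})$ as $t\to\infty$, matching the upper bound.

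The main obstacle is the calibration of the connectivity threshold: one needs $m\lambda_k-d^2/4\geq(n-1)d^2$ so that after summing the weighted bound over $n-1$ modes the coefficient $1/n$ is reached exactly, and tracing this back through the $d^2/4$ correction from underdamping is what forces the precise constant $(n-0.75)$ in \eqref{eq:connect-strong}. Once that inequality is in hand, the final combination is a one-line algebraic identity.
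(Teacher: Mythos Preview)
Your proposal is correct and follows essentially the same route as the paper: reduce to $\rho\|\boldsymbol{c}_i(t)\|_2$ via orthonormality of $\boldsymbol{V}$, force all oscillatory modes into the underdamped regime, bound $h_k(t)^2\leq e^{-dt/m}/((n-1)d^2)$ using exactly the calibration $m\lambda_k-d^2/4\geq(n-1)d^2$, combine with row-orthonormality to get $(1-x+x^2)/(nd^2)\leq 1/(nd^2)$, and verify achievability at $\boldsymbol{u}_0=\pm\rho\boldsymbol{v}_1$. The only cosmetic differences are that you work with $h_k^2$ rather than $|h_k|$ and package the final inequality via the substitution $x=e^{-dt/m}$, which is slightly cleaner.
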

\begin{proof} 
We prove the result by first showing that the frequency nadir on an arbitrary bus has the upper bound $\rho/\left(d\sqrt{n}\right)$ when \eqref{eq:connect-strong} holds and then showing that this upper bound can be achieved at $\boldsymbol{u}_0= \pm\rho\mathbbold{1}_{n}/\sqrt{n}$.

First note that the condition \eqref{eq:connect-strong} actually implies that, 
\begin{align}\label{eq:lambda_k-bound}
\forall k>1,\quad\lambda_k\geq\left(n-0.75\right)d^2/m>0.25d^2/m\,,   
\end{align}
where the first inequality is due to the fact that $\lambda_2$ is the second smallest eigenvalue of $\boldsymbol{L}$ and the second inequality holds if there is more than one bus in the network. Then, a simple calculation shows that \eqref{eq:lambda_k-bound} is equivalent to 
\begin{align}\label{eq:xik-bound}
\forall k>1 \,,\quad    1>\frac{d}{2\sqrt{\lambda_k m}}\,.
\end{align}
Thus, by the definition of damping ratio $\xi_k$ in \eqref{eq:omega_n-xi}, \eqref{eq:xik-bound} ensures that, $\forall k>1$, $0\leq\xi_k<1$, which allows us to bound $|h_k(t)|$, $\forall k>1$, as follows:
\begin{align}\label{eq:hk-bound}
    &|h_k(t)|\nonumber\\\stackrel{\circled{1}}{\leq}&\dfrac{e^{-\xi_k\omega_{\mathrm{n}k} t}}{m\omega_{\mathrm{n}k} \sqrt{1 - \xi_k^2}}\stackrel{\circled{2}}{=}\dfrac{e^{-\frac{d}{2\sqrt{\lambda_k m}}\sqrt{\frac{\lambda_k}{m}} t}}{m\sqrt{\frac{\lambda_k}{m}} \sqrt{1 - \left(\frac{d}{2\sqrt{\lambda_k m}}\right)^2}}\nonumber\\
    =&\ \!\dfrac{e^{-\frac{d}{2m}t}}{ \sqrt{\lambda_km \!-\! \frac{d^2}{4}}}\!\stackrel{\circled{3}}{\leq}\!\dfrac{e^{-\frac{d}{2m}t}}{ \sqrt{\left(n-0.75\right)\frac{d^2}{m}m \!-\! \frac{d^2}{4}}}\!=\!\dfrac{e^{-\frac{d}{2m}t}}{ d\sqrt{n-1}}\,,
\end{align}
where $\circled{1}$ uses the expression of $|h_k(t)|$ for the case when $0\leq\xi_k<1$ provided by Lemma~\ref{lem:hk-exp} and the fact that $|\sin(\cdot)|\leq1$, $\circled{2}$ uses the definitions of $\omega_{\mathrm{n}k}$ and $\xi_k$ in \eqref{eq:omega_n-xi} in Lemma~\ref{lem:hk-exp}, and $\circled{3}$ uses the first inequality in \eqref{eq:lambda_k-bound}.

We are ready to bound the frequency nadir on an arbitrary bus under the 2-norm constraint, whose expression can be obtained from \eqref{eq:cen-opt} and \eqref{eq:sol-2norm} as: 
\begin{align*}
   \max_{t>0}\!\dfrac{\rho}{\sqrt{r_i}}\!\left\lVert\boldsymbol{R}^{-\frac{1}{2}}\!\boldsymbol{V}\!\boldsymbol{c}_i(t) \right\rVert _2\!\!\!=\!\max_{t>0}\rho\left\lVert\boldsymbol{V}\!\boldsymbol{c}_i(t) \right\rVert _2\!=\!\max_{t>0}\rho\left\lVert\boldsymbol{c}_i(t) \right\rVert _2\!\,,
\end{align*} 
where the first equality uses the assumption $\boldsymbol{R}=\boldsymbol{I}_n$ and the secondary equality uses the unitary invariance property of $2$-norm. Recall that $\boldsymbol{c}_i(t):=\left(v_{k,i}h_k(t), k \in \mathcal{N} \right) \in \real^n$. The frequency nadir on bus $i$, $\forall i \in\mathcal{N}$, can be bounded as follows:
\begin{align}
   &\max_{t>0}\rho\left\lVert\boldsymbol{c}_i(t) \right\rVert _2\nonumber\\=&\max_{t>0}\rho\sqrt{ \sum_{k=1}^nv^2_{k,i}h^2_k(t)}
   =\max_{t>0}\rho\sqrt{ v^2_{1,i}h^2_1(t)+\sum_{k=2}^nv^2_{k,i}h^2_k(t)}\nonumber\\
   \stackrel{\circled{1}}{\leq}&
   \max_{t>0}\rho\sqrt{ \left(\dfrac{1}{\sqrt{n}}\right)^2\!\!\dfrac{1}{d^2}\left(1-e^{-\frac{d}{m}t}\right)^2\!+\!\sum_{k=2}^nv^2_{k,i}\!\left(\dfrac{e^{-\frac{d}{2m}t}}{ d\sqrt{n-1}}\right)^2}\nonumber\\
   =&\max_{t>0}\dfrac{\rho}{d}\sqrt{ \dfrac{1-2e^{-\frac{d}{m}t}+e^{-\frac{2d}{m}t}}{n}+\dfrac{e^{-\frac{d}{m}t}}{n-1}\left(\sum_{k=2}^nv^2_{k,i}\right)}\nonumber\\
   \stackrel{\circled{2}}{=}&\max_{t>0}\dfrac{\rho}{d}\sqrt{ \dfrac{1-2e^{-\frac{d}{m}t}+e^{-\frac{2d}{m}t}}{n}+\dfrac{e^{-\frac{d}{m}t}}{n-1}\left(1-v^2_{1,i}\right)}\nonumber\\
   \stackrel{\circled{3}}{=}&\max_{t>0}\dfrac{\rho}{d}\sqrt{ \dfrac{1-2e^{-\frac{d}{m}t}+e^{-\frac{2d}{m}t}}{n}+\dfrac{e^{-\frac{d}{m}t}}{n-1}\left[1-\left(\dfrac{1}{\sqrt{n}}\right)^2\right]}\nonumber\\
   =&\dfrac{\rho}{d\sqrt{n}}\max_{t>0}\sqrt{ 1-e^{-\frac{d}{m}t}+e^{-\frac{2d}{m}t}}\,.\label{eq:Nadir-busi}
\end{align} 
In $\circled{1}$, the expression of $\boldsymbol{v}_1$ in \eqref{eq:Vmatrix} for the case when $\boldsymbol{R}=\boldsymbol{I}_n$, i.e., $\boldsymbol{v}_1= \mathbbold{1}_n/\sqrt{n}$, the expression of $h_1(t)$ in \eqref{eq:h1-def} in Lemma~\ref{lem:hk-exp}, and the bound for $|h_k(t)|$ in \eqref{eq:hk-bound} are used. In $\circled{2}$, $\sum_{k=1}^nv^2_{k,i}$ is the squared $2$-norm of the $i$th row of $\boldsymbol{V}$. 
In $\circled{3}$, $\boldsymbol{v}_1= \mathbbold{1}_n/\sqrt{n}$ is used again.

Now, the problem of finding an upper bound for frequency nadir on an arbitrary bus turns into the problem \eqref{eq:Nadir-busi}. It is trivial to see that $-1=-1+0\leq-e^{-\frac{d}{m}t}+e^{-\frac{2d}{m}t}\leq0$, $\forall t>0$, which implies that 
\begin{align*}
    0\leq\sqrt{ 1-e^{-\frac{d}{m}t}+e^{-\frac{2d}{m}t}}\leq1\,,\quad\forall t>0\,.
\end{align*}
Thus, we have
\begin{align}\label{eq:bound-max}
    \max_{t>0}\sqrt{ 1-e^{-\frac{d}{m}t}+e^{-\frac{2d}{m}t}}\leq1\,.
\end{align}
Substituting \eqref{eq:bound-max} to \eqref{eq:Nadir-busi} yields the following upper bound for frequency nadir on an arbitrary bus: $\max_{t>0}\rho\!\left\lVert\boldsymbol{c}_i(t) \right\rVert _2\leq\rho/\left(d\sqrt{n}\right)$.

Finally, we show that this upper bound can be achieved by $\boldsymbol{u}_0= \pm\rho\mathbbold{1}_{n}/\sqrt{n}=\pm\rho\boldsymbol{v}_1$. From~\eqref{eq:u0-vk}, it is easy to see that, for $\boldsymbol{u}_0=\pm\rho\boldsymbol{v}_1$ and $\boldsymbol{R}=\boldsymbol{I}_n$, we have $\alpha_1=\pm\rho$ and $\alpha_k=0$, $\forall k\in\mathcal{N}\setminus\{1\}$. Thus, by Lemma~\ref{lem:w-t} and \eqref{eq:h1-def}, we have 
\begin{align}\label{eq:omega-equal-dis}
 \boldsymbol{\omega}(t)=\pm\rho h_1(t)\boldsymbol{v}_1 =  \pm\dfrac{\rho}{d}\left(1-e^{-\frac{d}{m}t}\right)\dfrac{\mathbbold{1}_{n}}{\sqrt{n}}\,.
\end{align}
Clearly, all buses in \eqref{eq:omega-equal-dis} exhibit the same behavior whose magnitude of frequency deviation monotonically increases on $t>0$. Thus, the frequency nadir on each bus is $|\omega_i(\infty)|=\rho /\left(d\sqrt{n}\right)$, which equals the upper bound shown above.
\end{proof}
\begin{rem}[Heterogeneous power network with strong connectivity]
    For power networks with heterogeneous generation units, some conclusions can still be drawn by
analyzing perturbations from the proportional case~\cite[Section VI-A]{pm2019preprint}. For example, although one cannot give a clean lower bound on $\lambda_2$ of $\boldsymbol{L}$ like \eqref{eq:connect-strong} to characterize the strong enough connectivity ensuring that the worst-case frequency nadir occurs when the disturbance is evenly
injected at all buses, it is found that all buses tend to respond in the same way as
the scaled representative generation unit when heterogeneous power network connectivity grows~\cite[Section VI-A]{pm2019preprint}. This allows us to show that, as
$\lambda_2\to\infty$ in heterogeneous power networks, the worst-case frequency nadir under the $2$-norm constraint on power disturbances is $(\rho\sqrt{n})/(d\sum_{i=1}^n \!r_i)$, achieved at $\boldsymbol{u}_0= \pm\rho\mathbbold{1}_{n}/\sqrt{n}$, which is a counterpart to Theorem~\ref{thm:even-largeweights}.
\end{rem}
\subsection{Power Networks with Weaker Connectivity}\label{ssec:weak-homo}

In power networks where the condition \eqref{eq:connect-strong} does not hold, the analysis of the worst-case frequency nadir becomes intractable. Therefore, we provide some intuitive discussions on what happens in power networks with weaker connectivity instead. If we consider the ``extreme'' case where weights of $\boldsymbol{L}_\mathrm{B}$ are sufficiently small to the extent that they can be deemed as $0$, we end up with a power network composed of isolated buses. Clearly, the worst-case frequency nadir, in this case, happens when the allowed power disturbance is injected at any single bus, with a value of $\rho/d$, since the frequency dynamics on that particular bus is exactly $\rho h_1(t)$ given in \eqref{eq:h1-def}. 

Of course, in practice, networks have ``intermediate'' connectivity. This is where the worst-case disturbance is neither concentrated at a single bus nor spread out to all buses, which, however, can be efficiently solved by our proposed Algorithm~\ref{alg:Nadir}.

\section{Numerical Illustrations}\label{sec:simulation}
In this section, we provide numerical validation for the performance of our proposed worst-case frequency nadir computation algorithm on two standard benchmarks. 

\subsection{Case Study: $50$-Generator Dynamic Test Case}\label{ssec:50}
We first test our algorithm on a power network with $50$ generation units, modified from the $50$-generator dynamic test case available in the Power Systems Toolbox~\cite{chow1992toolbox}. The inertia coefficients $m_i$ of the generation units and the Laplacian elements $\boldsymbol{L}_{\mathrm{B},{ij}}$ of the power network are obtained from the dataset via standard calculations with minor modifications to $m_i$. Then, we define the representative generation inertia
as the mean of $m_i$, i.e., $m := (\sum_{i=1}^{|\mathcal{N}|} m_i)/|\mathcal{N}|=\SI{559.69}{\second}$, where $|\mathcal{N}|=50$. Accordingly, the proportionality parameters are given by $r_i := m_i/m$, $\forall i \in \mathcal{N}$. Given that the values of damping coefficients are not provided by the dataset, we set $d_i = r_i d$ with the representative generation unit damping $d=\SI{2044.52}{\pu}$.\footnote{All per unit values are on the system base, where the system power base
is $S_0=\SI{100}{\mega\VA}$ and the nominal system frequency is $F_0=\SI{50}{\hertz}$.}


In the simulation, we implement Algorithm~\ref{alg:Nadir} developed in Section~\ref{ssec:alg}. Specifically, we consider a $2$-norm constraint on power disturbances with $\rho=0.5$, i.e., $\|\boldsymbol{u}_0\|_2\leq0.5$. With the time steps chosen as $\delta t=\SI{0.01}{\second}$ and $N=100$, Algorithm~\ref{alg:Nadir} takes around $\SI{0.7}{\second}$ in Matlab on MacBook Pro personal laptop to identify the power disturbances $\boldsymbol{u}_0^\star$ shown in Fig.~\ref{fig:dist-comp-50} which will produce the worst-case frequency nadir. 

\begin{figure}[t!]
\centering
\subfigure[Power disturbances, where a larger edge length reflects weaker connectivity and a smaller node size denotes lower inertia]
{\includegraphics[width=\columnwidth]{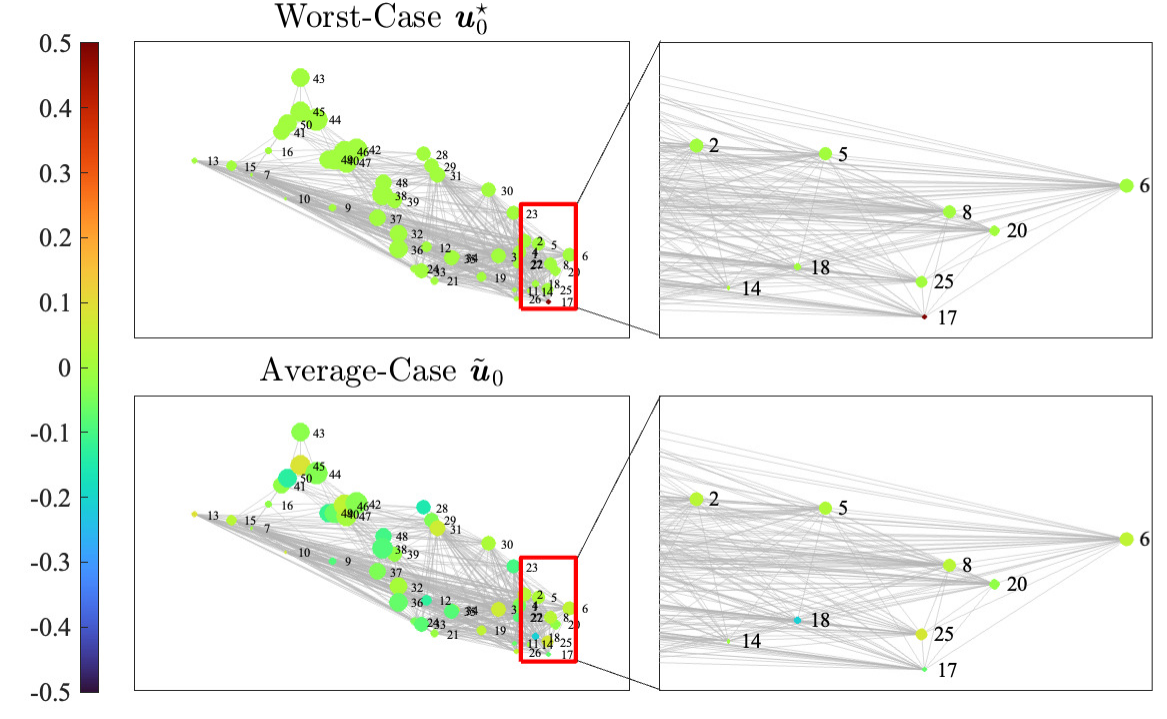}\label{fig:dist-comp-50}}
\hfil
\subfigure[Frequency deviations, where the legends provide the index of the bus with the maximum frequency nadir in each case]
{\includegraphics[width=\columnwidth]{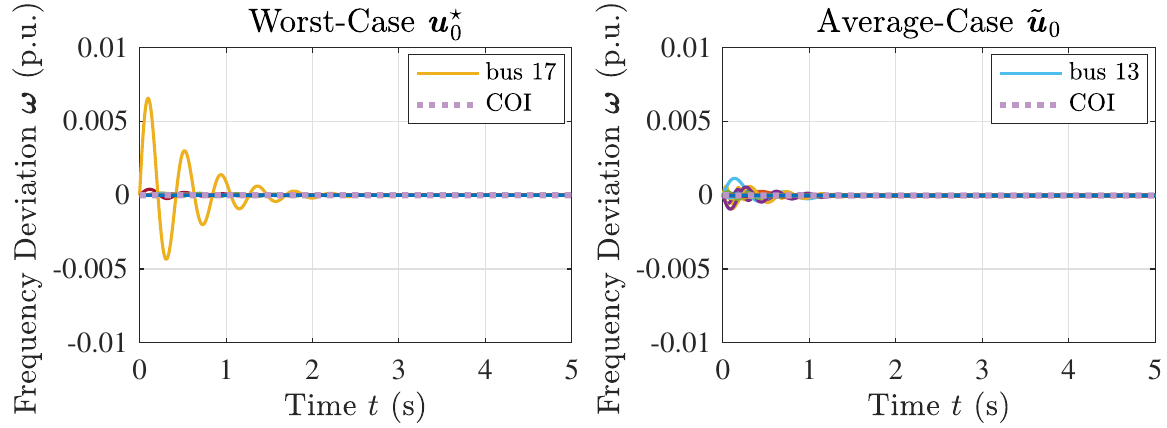}\label{fig:fre-comp-50}}
\caption{Comparison between the worst and a random allocation of power disturbances within a given $2$-norm bound as well as the resultant frequency deviations in a power network with $50$ generation units.}
\label{fig:compare-50}
\end{figure}

It is instructive to compare the worst-case disturbance $\boldsymbol{u}_0^\star$ and the ``average-case'' disturbance  $\tilde{\boldsymbol{u}}_0$ in this network, as shown in Fig.~\ref{fig:fre-comp-50}, where $\tilde{\boldsymbol{u}}_0$ is obtained by sampling random disturbances with $\|\tilde{\boldsymbol{u}}_0\|_2=0.5$.\footnote{This is done by first drawing each element of $\tilde{\boldsymbol{u}}_0$ from
the standard normal distribution and then scaling the vector to
have the desired $2$-norm.} Fig.~\ref{fig:fre-comp-50} confirms that the frequency nadir triggered by the worst-case disturbance is much larger than the one seen in an average case. In addition, it is not obvious why bus $17$ has the worst-case frequency nadir since it is neither the point with the weakest connectivity (bus $13$) nor the point with the lowest inertia (bus $10$). This shows that explicitly computing the frequency behaviors is useful. 

\subsection{Case Study: $2000$-Bus Great Britain Network}
We also test our algorithm on a power network with $394$ generation units, based on the $2000$-bus Great Britain network taken from the Power Systems Test Case Archive.  The Laplacian elements $\boldsymbol{L}_{\mathrm{B},{ij}}$ of the power network are obtained from the dataset via standard calculations. Since inertia and damping information are not contained in the original dataset, we draw $m_i$ uniformly at random from $(0,1000)\SI{}{\second}$, yielding $m:= (\sum_{i=1}^{|\mathcal{N}|} m_i)/|\mathcal{N}|=\SI{497.18}{\second}$ and $r_i := m_i/m$, $\forall i\in \mathcal{N}$, and then we set $d_i = r_i d$ with the representative generation unit damping $d=\SI{1816.18}{\pu}$. Other simulation settings are the same as in Section~\ref{ssec:50}, i.e., $\rho=0.5$, $\delta t=\SI{0.01}{\second}$, and $N=100$.

For this setting, Algorithm~\ref{alg:Nadir} takes around $\SI{224}{\second}$ to identify that the worst-case $\boldsymbol{u}_0^\star$ will produce the $\SI{0.0075}{\pu}$ frequency nadir, occurring at $t=\SI{0.46}{\second}$ on bus $364$, which is confirmed by the plot of frequency deviations in Fig.~\ref{fig:compare-2000} obtained from the dynamic simulation. 

\begin{figure}[t!]
\centering
\includegraphics[width=\columnwidth]{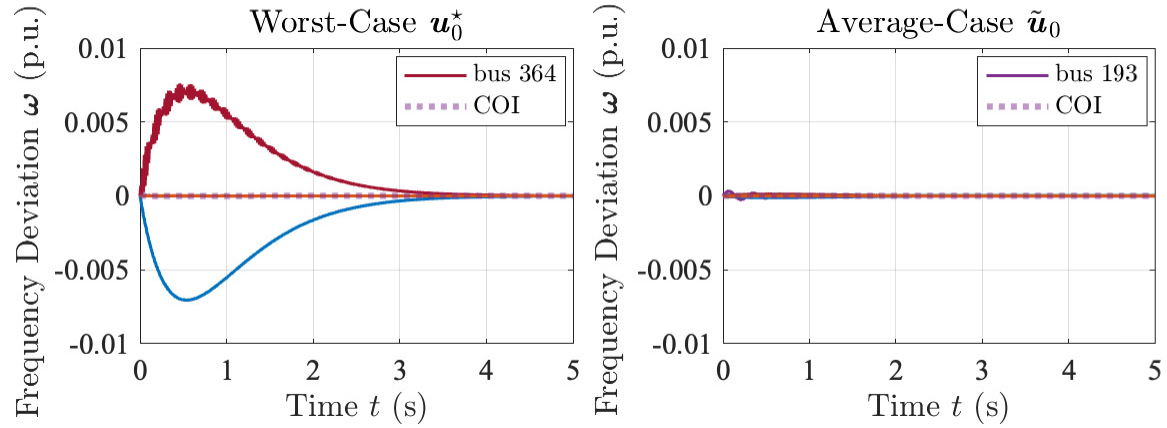}
\caption{Comparison between the frequency deviations in the Great Britain network under the worst and a random allocation of power disturbances within a given $2$-norm bound. In this particular example, there are two buses (buses $364$ and $365$) that swing each other in the worst-case.}
\label{fig:compare-2000}
\end{figure}

\section{Conclusions and Outlook}\label{sec:conclusion}

Above all, we have presented an efficient algorithm for computing the worst-case frequency nadir under a certain norm bound on power disturbances that the power network should survive. Compared to the COI frequency, the worst-case frequency nadir provided by such an algorithm is especially reliable for providing frequency security assessment to power grids with increased amount of renewable resources located in remote areas entering through long transmission lines, which are known to suffer from oscillations. In addition, we have demonstrated the impact of network connectivity on the worst-case frequency nadir in power networks with homogeneous generation units. Our future research will concentrate on some important extensions, including analysis of the worst-case frequency nadir for power networks with intermediate connectivity and heterogeneous parameters.

\appendix[]

\subsection{Proof of Lemma~\ref{lem:w-t}} \label{app:lem1-pf}
First, \eqref{eq:alpha} is due to \eqref{eq:u0-vk} since $\boldsymbol{\alpha}=(\boldsymbol{R}^{\frac{1}{2}}\boldsymbol{V})^{-1}\boldsymbol{u}_0=\boldsymbol{V}^T\boldsymbol{R}^{-\frac{1}{2}}\boldsymbol{u}_0$. Now, 
substituting $\hat{\boldsymbol{p}}(s)=\boldsymbol{u}_0/s$ to \eqref{eq:Tp} yields
 \begin{align}
\hat {\boldsymbol{\omega}}(s) =\ & \boldsymbol{R}^{-\frac{1}{2}} \boldsymbol{V} \diag{\hat{z}_{k}(s), k \in \mathcal{N}} \boldsymbol{V}^T \boldsymbol{R}^{-\frac{1}{2}}\boldsymbol{u}_0/s\nonumber
\\ \stackrel{\circled{1}}{=}\ & \boldsymbol{R}^{-\frac{1}{2}} \boldsymbol{V} \diag{\hat{z}_{k}(s), k \in \mathcal{N}} \boldsymbol{\alpha}/s\nonumber\\
=\ & \boldsymbol{R}^{-\frac{1}{2}} \boldsymbol{V} \sum_{k=1}^n  \dfrac{\alpha_k\hat{z}_{k}(s)}{s}\boldsymbol{e}_k=\boldsymbol{R}^{-\frac{1}{2}} \sum_{k=1}^n  \alpha_k\dfrac{\hat{z}_{k}(s)}{s}\boldsymbol{V} \boldsymbol{e}_k\nonumber\\\stackrel{\circled{2}}{=}\ &\boldsymbol{R}^{-\frac{1}{2}} \sum_{k=1}^n \alpha_k \dfrac{\hat{z}_{k}(s)}{s} \boldsymbol{v}_k\nonumber\\\stackrel{\circled{3}}{=} \ &\sum_{k=1}^n  \alpha_k\frac{\hat{g}_\mathrm{o}(s)}{s+\lambda_k\hat{g}_\mathrm{o}(s)} \boldsymbol{R}^{-\frac{1}{2}}\boldsymbol{v}_k\;,\label{eq:w(s)-sum}
\end{align} 
where \circled{1} uses \eqref{eq:alpha}, \circled{2} uses $\boldsymbol{e}_k$ to plug out the $k$th column of $\boldsymbol{V}$, and \circled{3} uses \eqref{eq:zk-s}. Then, \eqref{eq:omega-t-decomple} follows directly from taking inverse Laplace transform to \eqref{eq:w(s)-sum}. Finally, substituting the explicit expression of $\boldsymbol{v}_1$ in \eqref{eq:Vmatrix} to \eqref{eq:omega-t-decomple} yields \eqref{eq:omega-t-decomple-coi}.

\subsection{Proof of Lemma~\ref{lem:dual-norm}} \label{app:lem2-pf}
Part of the proof can be found in textbooks. Here, we provide a complete proof for self-containedness.

For the $2$-norm case, by Cauchy-Schwartz inequality~\cite[Theorem 5.1.4]{Horn2012MA} and the norm constraint, we have $\left\lvert\boldsymbol{x}^T\boldsymbol{y} \right\rvert \leq\|\boldsymbol{x}\|_2\|\boldsymbol{y} \|_2\leq\|\boldsymbol{x}\|_2$,
where the upper bound is clearly achieved at $\boldsymbol{y}^\star=\pm \boldsymbol{x}/\|\boldsymbol{x}\|_2$ since $\left\lvert\boldsymbol{x}^T(\pm \boldsymbol{x}/\|\boldsymbol{x}\|_2) \right\rvert=\left\lvert\pm \boldsymbol{x}^T\boldsymbol{x} \right\rvert/\|\boldsymbol{x}\|_2=\|\boldsymbol{x}\|_2^2/\|\boldsymbol{x}\|_2=\|\boldsymbol{x}\|_2$. It is easy to check that $\|\boldsymbol{y}^\star\|_2=1$.

For the $\infty$-norm case, we have $\left\lvert\boldsymbol{x}^T\boldsymbol{y} \right\rvert =\left\lvert\sum_{i=1}^nx_iy_i\right\rvert\leq\sum_{i=1}^n|x_i||y_i|\leq(\max_{i\in\mathcal{N}}|y_i|)\sum_{i=1}^n|x_i|=:\|\boldsymbol{y}\|_\infty\|\boldsymbol{x}\|_1\leq\|\boldsymbol{x}\|_1$, where the last inequality uses the norm constraint. This upper bound is achieved at $\boldsymbol{y}^\star=\pm\mathrm{sign}(\boldsymbol{x})$ since $\left\lvert\boldsymbol{x}^T(\pm \mathrm{sign}(\boldsymbol{x}))\right\rvert =\left\lvert\pm\boldsymbol{x}^T\mathrm{sign}(\boldsymbol{x}) \right\rvert=\left\lvert\boldsymbol{x}^T\mathrm{sign}(\boldsymbol{x}) \right\rvert=\sum_{i=1}^n|x_i|=:\|\boldsymbol{x}\|_1$. It is easy to check that $\|\boldsymbol{y}^\star\|_\infty=1$.

For the $1$-norm case, we have $\left\lvert\boldsymbol{x}^T\boldsymbol{y} \right\rvert =\left\lvert\sum_{i=1}^nx_iy_i\right\rvert\leq\sum_{i=1}^n|x_i||y_i|\leq(\max_{i\in\mathcal{N}}|x_i|)\sum_{i=1}^n|y_i|=:\|\boldsymbol{x}\|_\infty\|\boldsymbol{y}\|_1\leq\|\boldsymbol{x}\|_\infty$, where the last inequality uses the norm constraint. This upper bound is achieved at $\boldsymbol{y}^\star=\pm\boldsymbol{e}_{i_\mathrm{m}}$ with $i_\mathrm{m}:=\arg\max_{i\in\mathcal{N}}|x_i|$ since $\left\lvert\boldsymbol{x}^T(\pm \boldsymbol{e}_{i_\mathrm{m}} )\right\rvert =\left\lvert\boldsymbol{x}^T\boldsymbol{e}_{i_\mathrm{m}} \right\rvert=\left\lvert x_{i_\mathrm{m}}\right\rvert=\max_{i\in\mathcal{N}}|x_i|=:\|\boldsymbol{x}\|_\infty$. It is easy to check that $\|\boldsymbol{y}^\star\|_1=1$.

\subsection{Proof of Lemma~\ref{lem:hk-exp}} \label{app:lem4-pf}
Substituting \eqref{eq:go} to \eqref{eq:hk-def} yields
\begin{align*}
    h_1(t)\!=&\!\ \mathscr{L}^{-1}\!\left\{\frac{1}{s\left(ms+d\right)}\right\}\!=\!\mathscr{L}^{-1}\!\left\{\dfrac{1}{d}\!\left(\dfrac{1}{s}-\frac{1}{s+d/m}\right)\right\}\,,\nonumber \\
    h_k(t)=&\!\ \mathscr{L}^{-1}\!\left\{\!\frac{1}{m\left(s^2+2\xi_k\omega_{\mathrm{n}k}s+\omega_{\mathrm{n}k}^2\right)}\!\right\}\,, \quad\forall k>1\,,\nonumber
\end{align*}
with the natural frequency $\omega_{\mathrm{n}k}$ and damping ratio $\xi_k$ defined by \eqref{eq:omega_n-xi}. Note that the explicit expression  of $h_k(t)$, $\forall k>1$, depends on the value of damping ratio $\xi_k$ as follows:
\begin{enumerate}
    \item Under damped case ($0\leq\xi_k<1$): 
    \begin{align}
    h_k(t)\!=\!\mathscr{L}^{-1}\!\!\left\{\!\dfrac{1}{m\omega_{\mathrm{n}k} \sqrt{1 \!-\! \xi_k^2}}\dfrac{\omega_{\mathrm{n}k} \sqrt{1 - \xi_k^2}}{\left(s\!+\!\xi_k\omega_{\mathrm{n}k}\right)^2\!+\!\omega_{\mathrm{n}k}^2\!\left(1 \!-\! \xi_k^2\right)}\!\right\}\!\,;\nonumber
\end{align}
\item Critically damped case ($\xi_k=1$):
\begin{align*}
   h_k(t)\!=&\ \!\mathscr{L}^{-1}\!\left\{\!\dfrac{1}{m}\dfrac{1}{\left(s+\omega_{\mathrm{n}k}\right)^2}\!\right\}\,;
\end{align*}
\item Over damped case ($\xi_k>1$):
\begin{align*}
   h_k(t)\!=&\ \!\mathscr{L}^{-1}\!\left\{\!\dfrac{1}{m\left(\sigma_{1,k}-\sigma_{2,k}\right)}\!\left(\dfrac{1}{s+\sigma_{2,k}}-\dfrac{1}{s+\sigma_{1,k}}\right)\!\right\}
\end{align*}
with $\sigma_{1,k} := \omega_{\mathrm{n},k}\!\left(\xi_k\!+\! \sqrt{\xi_k^2\!-\!1}\right)$ and $\sigma_{2,k} := \omega_{\mathrm{n},k}\left(\xi_k\!-\! \sqrt{\xi_k^2\!-\!1}\right)$. 
\end{enumerate}
Therefore, the results follow directly from the standard inverse Laplace transform formulas.

\bibliographystyle{IEEEtran}
\bibliography{main}

\begin{thebibliography}{10}
\providecommand{\url}[1]{#1}
\csname url@samestyle\endcsname
\providecommand{\newblock}{\relax}
\providecommand{\bibinfo}[2]{#2}
\providecommand{\BIBentrySTDinterwordspacing}{\spaceskip=0pt\relax}
\providecommand{\BIBentryALTinterwordstretchfactor}{4}
\providecommand{\BIBentryALTinterwordspacing}{\spaceskip=\fontdimen2\font plus
\BIBentryALTinterwordstretchfactor\fontdimen3\font minus
  \fontdimen4\font\relax}
\providecommand{\BIBforeignlanguage}[2]{{%
\expandafter\ifx\csname l@#1\endcsname\relax
\typeout{** WARNING: IEEEtran.bst: No hyphenation pattern has been}%
\typeout{** loaded for the language `#1'. Using the pattern for}%
\typeout{** the default language instead.}%
\else
\language=\csname l@#1\endcsname
\fi
#2}}
\providecommand{\BIBdecl}{\relax}
\BIBdecl

\bibitem{jiangtps2021}
Y.~Jiang, E.~Cohn, P.~Vorobev, and E.~Mallada, ``Storage-based frequency
  shaping control,'' \emph{IEEE Transactions on Power Systems}, vol.~36, no.~6,
  pp. 5006--5019, Nov. 2021.

\bibitem{milano2017rotor}
F.~Milano, ``Rotor speed-free estimation of the frequency of the center of
  inertia,'' \emph{IEEE Transactions on Power Systems}, vol.~33, no.~1, pp.
  1153--1155, Jan. 2018.

\bibitem{azizi2020local}
S.~Azizi, M.~Sun, G.~Liu, and V.~Terzija, ``Local frequency-based estimation of
  the rate of change of frequency of the center of inertia,'' \emph{IEEE
  Transactions on Power Systems}, vol.~35, no.~6, pp. 4948--4951, Nov. 2020.

\bibitem{Min2021lcss}
H.~Min, F.~Paganini, and E.~Mallada, ``Accurate reduced-order models for
  heterogeneous coherent generators,'' \emph{IEEE Control Systems Letters},
  vol.~5, no.~5, pp. 1741--1746, Nov. 2021.

\bibitem{klein1991fundamental}
M.~Klein, G.~J. Rogers, and P.~Kundur, ``A fundamental study of inter-area
  oscillations in power systems,'' \emph{IEEE Transactions on power systems},
  vol.~6, no.~3, pp. 914--921, Aug. 1991.

\bibitem{rogers2000}
G.~Rogers, \emph{{Power System Oscillations}}, 1st~ed.\hskip 1em plus 0.5em
  minus 0.4em\relax Springer, 2000.

\bibitem{aboul1996damping}
M.~Aboul-Ela, A.~Sallam, J.~McCalley, and A.~Fouad, ``Damping controller design
  for power system oscillations using global signals,'' \emph{IEEE Transactions
  on Power Systems}, vol.~11, no.~2, pp. 767--773, May 1996.

\bibitem{rafique2022bibliographic}
Z.~Rafique, H.~M. Khalid, S.~Muyeen, and I.~Kamwa, ``Bibliographic review on
  power system oscillations damping: An era of conventional grids and renewable
  energy integration,'' \emph{International Journal of Electrical Power \&
  Energy Systems}, vol. 136, p. 107556, Mar. 2022.

\bibitem{Gautam2009TPS}
D.~Gautam, V.~Vittal, and T.~Harbour, ``Impact of increased penetration of
  dfig-based wind turbine generators on transient and small signal stability of
  power systems,'' \emph{IEEE Transactions on Power Systems}, vol.~24, no.~3,
  pp. 1426--1434, Aug. 2009.

\bibitem{Pagnier2019Plos}
L.~Pagnier and P.~Jacquod, ``Inertia location and slow network modes determine
  disturbance propagation in large-scale power grids,'' \emph{PLOS ONE},
  vol.~14, no.~3, p. 0213550, Mar. 2019.

\bibitem{zhangtseoscillation}
M.~Zhang, Z.~Miao, and L.~Fan, ``Reduced-order analytical models of
  grid-connected solar photovoltaic systems for low-frequency oscillation
  analysis,'' \emph{IEEE Transactions on Sustainable Energy}, vol.~12, no.~3,
  pp. 1662--1671, July 2021.

\bibitem{Khamisov2024TPS}
O.~O. Khamisov, M.~Ali, T.~Sayfutdinov, Y.~Jiang, V.~Terzija, and P.~Vorobev,
  ``A novel contingency-aware primary frequency control for power grids with
  high {CIG}-penetration,'' \emph{IEEE Transactions on Power Systems}, pp.
  1--14, 2023 (early access).

\bibitem{guo2018cdc}
L.~{Guo}, C.~{Zhao}, and S.~H. {Low}, ``Graph laplacian spectrum and primary
  frequency regulation,'' in \emph{Proc. of IEEE Conference on Decision and
  Control}, Dec. 2018, pp. 158--165.

\bibitem{rouco1998eigenvalue}
L.~Rouco, ``Eigenvalue-based methods for analysis and control of power system
  oscillations,'' in \emph{Proc. of IEE Colloquium on Power System Dynamics
  Stabilisation (Digest No. 1998/196 and 1998/278)}, Feb. 1998, pp. 3/1--3/6.

\bibitem{Minl4dc23cluster}
H.~Min and E.~Mallada, ``Learning coherent clusters in weakly-connected network
  systems,'' in \emph{Proc. of Learning for Dynamics and Control Conference},
  June 2023, pp. 1167--1179.

\bibitem{AllibhoyOJCSoscillation}
A.~Allibhoy, F.~Celi, F.~Pasqualetti, and J.~Cortés, ``Optimal network
  interventions to control the spreading of oscillations,'' \emph{IEEE Open
  Journal of Control Systems}, vol.~1, pp. 141--151, Aug. 2022.

\bibitem{Zhao:2013ts}
C.~Zhao, U.~Topcu, N.~Li, and S.~H. Low, ``Power system dynamics as primal-dual
  algorithm for optimal load control,'' \emph{arXiv preprint:1305.0585}, May
  2013.

\bibitem{pm2019preprint}
F.~{Paganini} and E.~{Mallada}, ``Global analysis of synchronization
  performance for power systems: Bridging the theory-practice gap,'' \emph{IEEE
  Transactions on Automatic Control}, vol.~65, no.~7, pp. 3007--3022, July
  2020.

\bibitem{jiang2021tac}
Y.~Jiang, R.~Pates, and E.~Mallada, ``Dynamic droop control in low-inertia
  power systems,'' \emph{IEEE Transactions on Automatic Control}, vol.~66,
  no.~8, pp. 3518--3533, Aug. 2021.

\bibitem{jiang2021lcss}
Y.~Jiang, A.~Bernstein, P.~Vorobev, and E.~Mallada, ``Grid-forming frequency
  shaping control for low-inertia power systems,'' \emph{IEEE Control Systems
  Letters}, vol.~5, no.~6, pp. 1988--1993, Dec. 2021.

\bibitem{Purchala2005dc-flow}
K.~{Purchala}, L.~{Meeus}, D.~{Van Dommelen}, and R.~{Belmans}, ``Usefulness of
  {DC} power flow for active power flow analysis,'' in \emph{Proc. of IEEE
  Power Engineering Society General Meeting}, June 2005, pp. 454--459.

\bibitem{Knap2016size}
V.~{Knap}, S.~K. {Chaudhary}, D.~{Stroe}, M.~{Swierczynski}, B.~{Craciun}, and
  R.~{Teodorescu}, ``Sizing of an energy storage system for grid inertial
  response and primary frequency reserve,'' \emph{IEEE Transactions on Power
  Systems}, vol.~31, no.~5, pp. 3447--3456, Sept. 2016.

\bibitem{NG2016}
``System operability framework 2016,'' National Grid Electricity System
  Operator, Tech. Rep., Nov. 2016.

\bibitem{NG2016standard}
``National electricity transmission system security and quality of supply
  standard,'' National Grid Electricity System Operator, Tech. Rep., Apr. 2019.

\bibitem{oakridge2013}
G.~Kou, S.~W. Hadley, P.~Markham, and Y.~Liu, ``Developing generic dynamic
  models for the 2030 eastern interconnection grid,'' Oak Ridge National
  Laboratory, Tech. Rep., Dec. 2013.

\bibitem{Horn2012MA}
R.~A. Horn and C.~R. Johnson, \emph{Matrix Analysis}, 2nd~ed.\hskip 1em plus
  0.5em minus 0.4em\relax Cambridge University Press, 2012.

\bibitem{chow1992toolbox}
J.~H. Chow and K.~W. Cheung, ``A toolbox for power system dynamics and control
  engineering education and research,'' \emph{IEEE transactions on Power
  Systems}, vol.~7, no.~4, pp. 1559--1564, Nov. 1992.

\end{thebibliography}

\end{document}